\documentclass[conference, twocolumn]{IEEEtran}
\IEEEoverridecommandlockouts
% The preceding line is only needed to identify funding in the first footnote. If that is unneeded, please comment it out.
\usepackage{cite}
\usepackage{amsmath, amssymb, mathtools, bm, gensymb}
\usepackage[shortcuts]{glossaries}
\usepackage{algorithmic}
\usepackage{graphicx}
\usepackage{textcomp}
\usepackage{xcolor}
\usepackage{pgfplots}
\usepackage{amsthm}
\usepackage{float}

\usetikzlibrary{arrows}
\usetikzlibrary{shapes}
\pgfplotsset{compat=1.14}

% needs to use -shell-escape switch on pdflatex!
% $ pdflatex -shell-escape draft.tex
% https://tex.stackexchange.com/questions/7953/how-to-expand-texs-main-memory-size-pgfplots-memory-overload
% \usepgfplotslibrary{external} 
% \tikzexternalize

\def\BibTeX{{\rm B\kern-.05em{\sc i\kern-.025em b}\kern-.08em
    T\kern-.1667em\lower.7ex\hbox{E}\kern-.125emX}}

\newcommand{\Mbs}{\ensuremath{M_{\text{BS}}}}
\newcommand{\Mh}{\ensuremath{M_{\text{H}}}}
\newcommand{\Mv}{\ensuremath{M_{\text{V}}}}
\newcommand{\Etx}{\ensuremath{E_{\text{Tx}}}}
\newcommand{\hhlos}{\ensuremath{\bm{h}_{\text{H},u}^\text{LOS}[n]}}
\newcommand{\hvlos}{\ensuremath{\bm{h}_{\text{V},u}^\text{LOS}[n]}}
\DeclareMathOperator*{\argmax}{arg\,max}
\DeclareMathOperator*{\vspan}{span}

\newtheorem{theorem}{Theorem}
\newtheorem{corollary}{Corollary}
\newtheorem{lemma}{Lemma}

\newtheorem*{nremark}{Remark}

% Do not forget to include "\usepackage[shortcuts]{glossaries}" into the main document
\newacronym{mimo}{MIMO}{multiple-input multiple-output}
\newacronym{als}{ALS}{alternating least squares}
\newacronym{bs}{BS}{base station}
\newacronym{3gpp}{3GPP}{third generation partnership project}
\newacronym{zf}{ZF}{zero-forcing}
\newacronym{mse}{MSE}{mean square error}
\newacronym{mmse}{MMSE}{minimum mean square error}
\newacronym{lms}{LMS}{least mean square}
\newacronym{upa}{URA}{uniform rectangular array}
\newacronym[plural=ULAs,firstplural=uniform linear arrays (ULAs)]{ula}{ULA}{uniform linear array}
\newacronym{zmcsg}{ZMCSG}{zero mean circularly symmetric Gaussian}
\newacronym{awgn}{AWGN}{additive white Gaussian noise}
\newacronym{snr}{SNR}{signal to noise ratio}
\newacronym{sir}{SIR}{signal to interference ratio}
\newacronym{sinr}{SINR}{signal to interference and noise ratio}
\newacronym{ber}{BER}{bit error ratio}
\newacronym{4g}{4G}{fourth-generation}
\newacronym{5g}{5G}{fifth-generation}
\newacronym{6g}{6G}{sixth-generation}
\newacronym{iid}{i.i.d.}{independent and identically distributed}
\newacronym{flops}{FLOPS}{floating point operations}
\newacronym{qpsk}{QPSK}{quadrature phase shift-keying}
\newacronym{lcmv}{LCMV}{linear constrained minimum variance}
\newacronym{af}{AF}{array factor}
\newacronym{music}{MUSIC}{multiple signal classification}
\newacronym{mc}{MC}{Monte Carlo}
\newacronym[plural=UEs, firstplural=user equipment (UE)]{ue}{UE}{user equipment}
\newacronym{cp}{CP}{canonical polyadic}
\newacronym{sumimo}{SU-MIMO}{single-user MIMO}
\newacronym{mumimo}{MU-MIMO}{multi-user MIMO}
\newacronym{rf}{RF}{radio-frequency}
\newacronym[plural=PSNs, firstplural=phase-shift networks (PSNs)]{psn}{PSN}{phase-shifting network}
\newacronym{met}{MET}{maximum eigenmode transmission}
\newacronym[plural=DACs, firstplural=digital/analog converters (DACs)]{dac}{DAC}{digital/analog converter}
\newacronym[plural=PAs, firstplural=power amplifiers (PAs)]{pa}{PA}{power amplifier}
\newacronym{agc}{AGC}{automatic gain control}
\newacronym{svd}{SVD}{singular value decomposition}
\newacronym{ad}{A/D}{analog/digital}
\newacronym{mmwave}{mmWave}{millimeter wave}
\newacronym{cfo}{CFO}{carrier frequency offset}
\newacronym{tx}{TX}{transmitter}
\newacronym{rx}{RX}{receiver}
\newacronym{omp}{OMP}{orthogonal matching pursuit}
\newacronym{nmse}{NMSE}{normalized mean square error}
\newacronym{cs}{CS}{compressive sensing}
\newacronym{csi}{CSI}{channel state information}
\newacronym{csit}{CSIT}{channel state information at the transmitter}
\newacronym{ls}{LS}{least squares}
\newacronym{pn}{PN}{phase noise}
\newacronym{tdd}{TDD}{time division duplexing}
\newacronym{fdd}{FDD}{frequency division duplexing}
\newacronym[plural=ADCs, firstplural=analog/digital converters (ADCs)]{adc}{ADC}{analog/digital converter}
\newacronym{mer}{MER}{maximum eigenmode reception}
\newacronym{lte}{LTE}{long-term evolution}
\newacronym{bd}{BD}{block diagonalization}
\newacronym{rb}{RB}{resource block}
\newacronym{rbg}{RBG}{resource block group}
\newacronym{ofdm}{OFDM}{orthogonal frequency division multiplexing}
\newacronym{hosvd}{HOSVD}{higher-order singular value decomposition}
\newacronym{dft}{DFT}{discrete Fourier transform}
\newacronym{idft}{IDFT}{inverse discrete Fourier transform}
\newacronym{cpe}{CPE}{common phase error}
\newacronym{ici}{ICI}{intercarrier interference}
\newacronym{ura}{URA}{uniform rectangular array}
\newacronym{pdf}{pdf}{probability density function}
\newacronym{los}{LOS}{line of sight}
\newacronym{nlos}{NLOS}{non-line of sight}
\newacronym{mrt}{MRT}{maximum ratio transmission}
\newacronym{tmrt}{TMRT}{tensor maximum ratio transmission}
\newacronym{tzf}{TZF}{tensor zero-forcing}
\newacronym{tti}{TTI}{transmission time interval}
\newacronym{cdf}{cdf}{cumulative distribution function}
% Do not forget to use packages amsmath, amssymb, mathtools, bm, gensymb
				% calligraphic letters
				% blackboard letters
\newcommand{\tran}{\mathsf{T}}						% transpose operator
\newcommand{\hermit}{\mathsf{H}}					% hermitian operator
	% frobenius norm
	% l2 norm
		% stat expectation

\DeclareMathOperator{\Diag}{Diag}

\DeclarePairedDelimiterX{\openinterval}[2]{]}{]}{#1,#2}
\DeclarePairedDelimiterX{\openintervalboth}[2]{]}{[}{#1,#2}
\DeclarePairedDelimiterX{\closedinterval}[2]{[}{]}{#1,#2}
\definecolor{raspberry}{rgb}{0.89, 0.04, 0.36}

\begin{document}

\title{Low-Complexity Massive MIMO Tensor Precoding
% \thanks{Identify applicable funding agency here. If none, delete this.}
}

\author{
	\IEEEauthorblockN{\emph{Lucas N. Ribeiro$^*$, Stefan Schwarz$^{\dagger}$, Andr\'e L. F. de Almeida$^\ddagger$, Martin Haardt$^*$}}
	\IEEEauthorblockA{$^*$Communications Research Laboratory, Technische Universit\"at (TU) Ilmenau, Ilmenau, Germany\\
	$^{\dagger}$Christian Doppler Laboratory for Dependable Wireless Connectivity for the
	Society in Motion,
	TU Wien, Vienna, Austria\\
	$^\ddagger$Wireless Telecommunications Research Laboratory, Federal University of Cear\'a, Fortaleza, Brazil
}
}

\maketitle

\begin{abstract}
	We present a novel and low-complexity massive multiple-input multiple-output (MIMO) precoding strategy based on novel findings concerning the subspace separability of Rician fading channels. Considering a uniform rectangular array at the base station,  we show that the subspaces spanned by the channel vectors can be factorized as a tensor product between two lower dimensional subspaces. Based on this result, we formulate tensor maximum ratio transmit and zero-forcing precoders. We show that the proposed tensor precoders exhibit lower computational complexity and require less instantaneous channel state information than their linear counterparts. Finally, we  present computer simulations that demonstrate the applicability of the proposed tensor precoders in practical communication scenarios.
\end{abstract}

\begin{IEEEkeywords}
		Massive MIMO, tensors, precoding
\end{IEEEkeywords}	

\section{Introduction}

Massive \ac{mimo} is one of the main enabling technologies for 5G networks. It consists of employing many active antennas at the \ac{bs} to serve multiple users on the same time-frequency resource~\cite{marzetta_fundamentals_2016}. It can provide high data throughput by spatial signal processing at the \ac{bs} (precoding) to combat multi-user interference and to provide large beamforming gains. \Ac{mrt} and \ac{zf} precoding are known to perform well if accurate instantaneous \ac{csi} is available at the \ac{bs}. However, in practice, \ac{csi} estimates are often noisy, since accurate estimation of the high-dimensional massive \ac{mimo} channels can be quite expensive in terms of power and time-frequency resources. Moreover, the \ac{zf} precoder is known to be computationally expensive due to the large number of computations it requires to invert the high-dimensional channel Gram matrix~\cite{van_der_perre_efficient_2018}.

Several solutions have been proposed to simplify the \ac{csi} requirements and to reduce the computational complexity of massive \ac{mimo} precoders. An efficient solution consists of designing precoders based on partial \ac{csi} \cite{ying_kronecker_2014,qiu_downlink_2018,schwarz_robust_2018}. This kind of \ac{csi} is typically less expensive to estimate than full instantaneous \ac{csi}. To reduce the computational complexity of the precoder design, many strategies are available in the literature. For example, series expansion techniques~\cite{zhang_performance_2018}, precoder interpolation~\cite{kashyap_frequency-domain_2016}, decentralized filtering~\cite{li_decentralized_2017}, and multi-layer filtering~\cite{alkhateeb_multi-layer_2017,ribeiro_double-sided_2019}. Different approaches that exploit the algebraic properties of the \ac{mimo} channel to reduce both \ac{csi} requirements and computational complexity have been investigated in the literature~\cite{ribeiro_identification_2015,ribeiro_low-complexity_2017,ribeiro_separable_2019,ribeiro_low-rank_2019,wang_two-dimensional_2017,zhu_hybrid_2017}. In some conditions, the channel may be well-approximated by the tensor product between lower dimensional components. This allows us to develop low-complexity tensor filters.

Tensor filtering has been applied to system identification~\cite{ribeiro_identification_2015} and equalization problems~\cite{ribeiro_low-complexity_2017,ribeiro_separable_2019,ribeiro_low-rank_2019}. A common aspect among these works is the separable system model, i.e., the vector or matrix that models the system can be exactly factorized in terms of tensor products. In~\cite{ribeiro_identification_2015,ribeiro_low-complexity_2017,ribeiro_separable_2019,ribeiro_low-rank_2019}, we have developed low-complexity tensor filtering schemes that exploit this property to reduce the number of calculations involved in the filter design. However, strict separability is rarely encountered in practice due to the non-separable nature of many devices and physical phenomena, limiting the applicability of the previously proposed filtering methods.

In this paper, we adopt a different approach to our previous works. Instead of assuming simplified separable models, we consider a practical channel model and demonstrate that, under some conditions, the subspaces spanned by the channel vectors can be factorized into a tensor product between lower dimensional subspaces. More specifically, we consider a \ac{bs} equipped with a \ac{upa} and we assume Rician fading channels. We show that the subspace spanned by the channel vectors can be decomposed into the tensor product between the subspaces spanned by the \ac{bs} horizontal and vertical linear sub-arrays. Based on this result, we formulate the \ac{tmrt} and \ac{tzf} precoders. These tensor precoders are based on low-dimensional instantaneous \ac{csi}. Therefore,  they are less expensive to estimate than the full high-dimensional \ac{csi} required by the classical \ac{mrt} and \ac{zf} precoders. Moreover, they require much less computational resources than their classical counterparts. For example, we show that the runtime of \ac{tzf} is twice as fast as that of \ac{zf} while the achievable sum-rate difference is minimal. 

The proposed precoders are related to the techniques discussed in \cite{wang_two-dimensional_2017,zhu_hybrid_2017}. The work of \cite{wang_two-dimensional_2017} proposes a precoder scheme that exploits the geometry of URAs by means of the Kronecker (tensor)  product to reduce the design complexity. Likewise, \cite{zhu_hybrid_2017} leverages the tensor product in the design of analog beamformers to null undesired signals. We emphasize that the present paper extends the contributions of \cite{wang_two-dimensional_2017,zhu_hybrid_2017} by providing novel  theoretical results about the channel subspace separability. Moreover, we show when these results can be applied to reduce the \ac{csi} requirements and the computational complexity of the proposed precoding methods considering a practical dynamical scenario.

\emph{Notation:} Vectors and matrices are written as lowercase and uppercase boldface letters, respectively. The transpose and the conjugate transpose (Hermitian) of $\bm{X}$ are represented by $\bm{X}^\tran$ and $\bm{X}^\hermit$, respectively. The $N$-dimensional identity matrix is represented by $\bm{I}_{N}$ and the $(M\times N)$-dimensional null matrix by $\bm{0}_{M\times N}$. The symbol $\delta(\cdot)$ denotes the Kronecker's delta function. The $\Diag(\cdot)$ operator transforms an input vector into a diagonal matrix, $\vspan(\cdot)$ refers to the subspace spanned by the argument vectors, $O(\cdot)$ stands for the Big-O complexity notation, and $\otimes$ denotes the tensor product (also known as Kronecker product). The notation $[\bm{v}]_{\mathcal{I}}$ represents the vector obtained by selecting the elements of $\bm{v}$ that corresponds to the index set $\mathcal{I}$.

\section{System Model}

We consider a single-cell massive \ac{mimo} system with a \ac{bs} serving $U$ single-antenna user equipment (UEs). The system operates on perfectly-synchronized \ac{tdd} and the uplink-downlink channel reciprocity assumption holds. The BS is equipped with a \ac{upa} of size $\Mbs = \Mh \cdot \Mv$, as illustrated in Figure~\ref{fig:model}. Considering the downlink operation, the \ac{bs} employs precoding filters  $\bm{f}_u[n] \in \mathbb{C}^{\Mbs \times 1}$ to serve a data stream $s_u[n]$ to each UE $u$ at each \ac{tti} $n$. Let $\bm{h}_u[n] \in \mathbb{C}^{\Mbs \times 1}$ denote the downlink channel vector. Then, the received signal by the UE $u$ at \ac{tti} $n$ can be expressed as 
\begin{equation}
	y_u[n] = \bm{h}_u^\hermit[n]\bm{f}_u[n] s_u[n] + \sum_{j\neq u}^U \bm{h}_u^\hermit \bm{f}_{j}[n] s_j[n] + b_u[n],
\end{equation}
where $b_u[n]$ denotes a zero mean complex-valued \ac{awgn} component. We assume that $\mathbb{E}[s_u[m]s_j^*[n]] = \delta(u-j) \cdot \delta(m-n)$ and $\mathbb{E}[b_u[m]b_j^*[n]] = \sigma_b^2 \cdot \delta(u-j) \cdot \delta(m-n)$. The average \ac{bs} transmit power constraint can be expressed as $\sum_{u=1}^U E_{\text{Tx},u} \leq \Etx$, with $E_{\text{Tx},u} = \| \bm{f}_u[n] \|_2^2$ denoting the power allocated to UE $u$, and $\Etx \geq 0$ the total transmit power. The precoding filters are optimized based on imperfect \ac{csi}, as we will explain in more details in Section~\ref{sec:csi}. We define the downlink \ac{snr} as $\gamma_\text{DL} = \Etx/\sigma_b^2$.

\subsection{Channel Model}

We assume Rician flat fading channel model such that
\begin{equation}
	\bm{h}_u[n] = \sqrt{\frac{K}{K+1}} \bm{h}_u^\text{LOS}[n] + \sqrt{\frac{1}{K+1}} \bm{h}_u^{\text{NLOS}}[n] \in \mathbb{C}^{\Mbs \times 1},
\end{equation}
where $K \geq 0$ denotes the Rician $K$-factor, $\bm{h}_u^\text{LOS}[n]$ the \ac{los} component, and $\bm{h}_u^{\text{NLOS}}[n]$ the \ac{nlos} component. Note that $K$ controls the influence  of the \ac{los} term over the \ac{nlos} one. 

\begin{figure}[t]
	\centering
	\includegraphics[width=0.3\textwidth]{./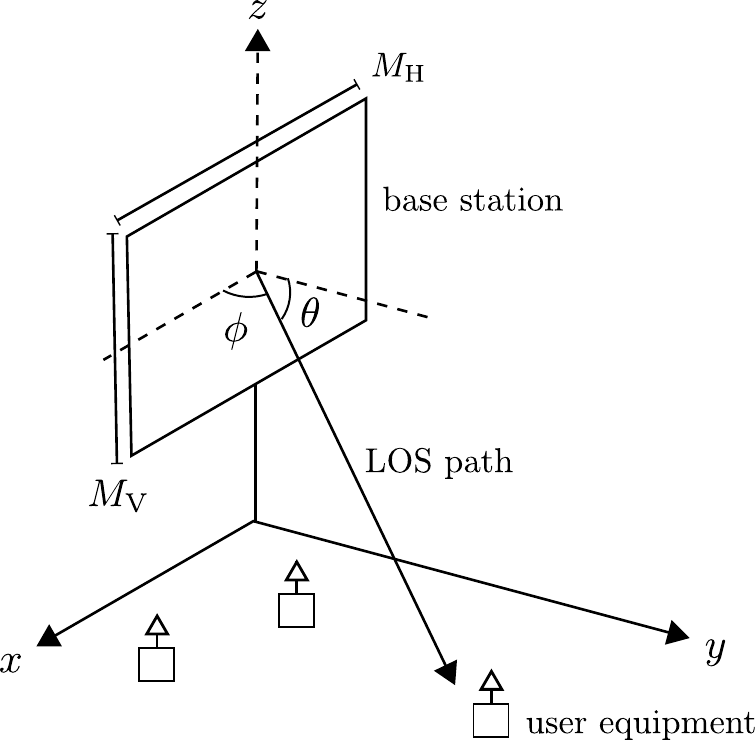}
	\caption{Illustration of the \ac{bs} and its geometry. Angles $\phi$ and $\theta$ denote  azimuth and elevation, respectively.}
	\label{fig:model}
\end{figure}

The \ac{los} component is determined by a Doppler phase shift $\psi_u[n]$ and an array steering vector $\bm{a}_{u}[n] \in \mathbb{C}^{\Mbs \times 1}$. Both components depend on the UE location and on its velocity relative to the \ac{bs}. Let us first define some geometrical notation to describe the Doppler phase shift. Let $\bm{p}_{\text{BS}}[n] \in \mathbb{R}^3$ and $\bm{p}_{\text{UE},u}[n] \in \mathbb{R}^3$ denote the $3$-dimensional position vectors of the \ac{bs} and UE $u$, respectively. The \ac{bs}-UE $u$ distance vector is defined as $\bm{d}_u[n] = \bm{p}_{\text{UE},u}[n] - \bm{p}_{\text{BS}}[n]$, and is normalized as
\begin{equation}
	\tilde{\bm{d}}_u[n] = \frac{\bm{d}_u[n]}{\| \bm{d}_u[n] \|_2} = \left[\tilde{d}_u^x[n], \tilde{d}_u^y[n], \tilde{d}_u^z[n]\right]^\tran.
\end{equation}
As illustrated in Figure~\ref{fig:model}, the respective elevation and azimuth angles of UE $u$ at \ac{tti} $n$ are given by
\begin{gather} \label{eq:mean_angles}
	\overline{\theta}_u[n] = \arcsin \tilde{d}_u^z[n], \quad \overline{\phi}_u[n] = \arctan\left( \frac{\tilde{d}_u^y[n]}{\tilde{d}_u^x[n]} \right).
\end{gather}
Assuming a certain random angle spread, the geometrical elevation and azimuth angles in \eqref{eq:mean_angles} are modeled as 
\begin{gather}
	\theta_u[n] = \overline{\theta}_u[n] + X_\theta[n] \label{eq:lostheta}\\
	\phi_u[n] = \overline{\phi}_u[n] + X_\phi[n], \label{eq:losphi}
\end{gather}
where $X_\theta[n]$ and $X_\phi[n]$ denote real-valued independent and identically distributed Gaussian random variables with zero mean and variance $\sigma_\theta^2$ and $\sigma_\phi^2$, respectively. Define the \ac{los} wave vector as~\cite{van_trees_optimum_2002}
\begin{align}
	&\bm{k}_u[n] =\\ 
	&\tfrac{2\pi}{\lambda} [\cos\theta_u[n]\cos\phi_u[n], \cos\theta_u[n]\sin \phi_u[n], \sin \theta_u[n]]^\tran, \nonumber
\end{align}
where $\lambda$ denotes the carrier wavelength. Furthermore, let $\bm{v}_u[n] \in \mathbb{R}^3$ denote the speed vector of UE $u$ relative to the fixed \ac{bs}. The Doppler phase shift is finally defined as $\psi_u[n] = \bm{k}_u^\tran[n] \bm{v}_u[n]$. The steering vector definition depends on how the antenna array elements are arranged in space. Considering a \ac{upa} placed in the $x$-$z$ plane as illustrated in Figure~\ref{fig:model}, the $m$-th element of the steering vector is given by~\cite{van_trees_optimum_2002}
	\begin{gather}
		[\bm{a}_u[n]]_m = \sqrt{g_{u,m}[n]} \cdot e^{-\jmath(\delta_{m_{\text{H}}}^{(u)}[n] +  \xi_{m_{\text{V}}}^{(u)}[n])} \label{eq:steering_el_full}\\
		\delta_{m_{\text{H}}}^{(u)}[n] = \frac{2\pi}{\lambda} d_\text{H}(m_\text{H} - 1) \cos \theta_u[n] \cos \phi_u[n] \\
		\xi_{m_{\text{V}}}^{(u)}[n] = \frac{2\pi}{\lambda} d_\text{V}(m_\text{V} - 1) \sin \theta_u[n]\\
		m = m_\text{V} + (m_\text{H} - 1) \cdot M_\text{V}\\
		m_\text{V} \in \{ 1,\ldots,M_\text{V}\}, \quad m_\text{H} \in \{ 1,\ldots, M_\text{H}\},
	\end{gather}
with $g_{u,m}[n]$ representing the $m$-th antenna element gain, and $d_{\text{H}}$ and $d_\text{V}$ the horizontal and vertical inter-antenna spacing, respectively. Note that the antenna gain $g_{u,m}[n]$ is a function of $\theta_u[n]$ and $\phi_u[n]$. From \eqref{eq:steering_el_full}, it follows that
\begin{equation} \label{eq:steering_vec_sep}
	\bm{a}_u[n] = \bm{G}_u[n] (\bm{a}_{\text{H},u}[n] \otimes \bm{a}_{\text{V},u}[n]),
\end{equation}
where $ \bm{G}_u[n] = \Diag(\sqrt{g_{u,1}[n]}, \ldots, \sqrt{g_{u,\Mbs}[n]})$ stands for the $\Mbs$-dimensional diagonal antenna gains matrix. The vectors $\bm{a}_{\text{H},u}[n] \in \mathbb{C}^{\Mh \times 1}$ and $\bm{a}_{\text{V},u}[n] \in \mathbb{C}^{\Mv \times 1}$ represent the horizontal and vertical sub-array steering vectors, respectively. Their elements are defined as
\begin{equation}
	[\bm{a}_{\text{H},u}[n]]_{m_\text{H}} = e^{-\jmath \delta_{m_{\text{H}}}^{(u)}[n]}, \quad [\bm{a}_{\text{V},u}[n]]_{m_\text{V}} = e^{-\jmath \xi_{m_{\text{V}}}^{(u)}[n]}.
\end{equation}
for $m_\text{H} \in \{ 1,\ldots,\Mh\}$, and  $m_\text{V} \in \{ 1,\ldots, \Mv\}$. Finally, the \ac{los} component can be expressed as
\begin{subequations}
	\begin{align}
		\bm{h}_{u}^\text{LOS}[n] &= e^{\jmath \psi_u[n]} \cdot \bm{a}_u[n]\\
								 &= e^{\jmath \psi_u[n]} \cdot \bm{G}_u[n] (\bm{a}_{\text{H},u}[n] \otimes \bm{a}_{\text{V},u}[n]).
	\end{align}	
\end{subequations}

The \ac{nlos} component consists of diffuse background scattering components modeled as Rayleigh fading. To model the fading time evolution, the fading is modeled as a first-order Gauss-Markov process \cite{truong_effects_2013}. Hence, the \ac{nlos} component is given by
\begin{equation}
	\bm{h}_{u}^{\text{NLOS}}[n+1] = \rho_u[n] \bm{h}_{u}^{\text{NLOS}}[n] + \sqrt{1-\rho_u^2[n]}\bm{h}_{\mathcal{N}},
\end{equation}
where $\bm{h}_{\mathcal{N}}$ is a \ac{zmcsg} random vector with spatial covariance matrix $\bm{R}_{\mathcal{N}}$, and $\rho_u[n]$ denotes the temporal correlation parameter. Considering the Clarke-Jakes autocorrelation model, it follows that $\rho_u[n] = J_0(2\pi f_{\text{D},u}[n] T_s)$, with $J_0(\cdot)$ representing the zeroth-order Bessel function, $f_{\text{D},u}[n] = \|\bm{v}_u[n]\|_2/ \lambda$, the maximum Doppler shift, and $T_s$ the \ac{tti} length.

\subsection{Sub-Array Representation}

The algebraic structure of \eqref{eq:steering_vec_sep} allows us to obtain the individual contributions of $\bm{a}_{\text{H},u}[n]$ and $\bm{a}_{\text{V},u}[n]$ by carefully selecting the elements of $\bm{h}_u[n]$. We define the respective horizontal and vertical sub-array index sets as
	\begin{gather}
		\mathcal{I}_{\text{H}} = \{ 1 + (m_\text{H}-1)\Mv\, | \, m_h = 1,\ldots, M_\text{H} \}\\
		\mathcal{I}_{\text{V}} = \{1, \ldots, \Mv \}.
	\end{gather}
Also, we define the $\Mh$-dimensional horizontal sub-array channel vector as
{\small
\begin{gather}
	\bm{h}_{\text{H},u}[n] = [ \bm{h}_u[n] ]_{\mathcal{I}_{\text{H}}} = \sqrt{\frac{K}{K+1}} \hhlos + \sqrt{\frac{1}{K+1}} \bm{h}_{\text{H},u}^{\text{NLOS}}[n] \\
	\bm{h}_{\text{H},u}^\text{LOS}[n] = e^{\jmath \psi_u[n]} \cdot \bm{G}_{\text{H}}[n] \bm{a}_{\text{H},u}[n], \quad \bm{h}_{\text{H},u}^{\text{NLOS}}[n] = \left[ \bm{h}_{u}^{\text{NLOS}}[n] \right]_{\mathcal{I}_\text{H}} \nonumber\\
	\bm{G}_{\text{H}}[n] = \Diag( \sqrt{g_{u,m}[n]} ), \forall m \in \mathcal{I}_{\text{H}}\nonumber
\end{gather}}
and the $\Mv$-dimensional vertical sub-array channel vector as
{\small
\begin{gather}
	\bm{h}_{\text{V},u}[n] = [ \bm{h}_u[n] ]_{\mathcal{I}_{\text{V}}} = \sqrt{\frac{K}{K+1}} \hvlos + \sqrt{\frac{1}{K+1}} \bm{h}_{\text{V},u}^{\text{NLOS}}[n] \\
	\bm{h}_{\text{V},u}^\text{LOS}[n] = e^{\jmath \psi_u[n]} \cdot \bm{G}_{\text{V}}[n] \bm{a}_{\text{V},u}[n], \quad \bm{h}_{\text{V},u}^{\text{NLOS}}[n] = \left[ \bm{h}_{u}^{\text{NLOS}}[n] \right]_{\mathcal{I}_\text{V}} \nonumber\\
	\bm{G}_{\text{V}}[n] = \Diag( \sqrt{g_{u,m}[n]} ), \forall m \in \mathcal{I}_{\text{V}}\nonumber
\end{gather}}
Note that the sub-array channel vectors are obtained from $\bm{h}_u[n]$ by simply selecting the corresponding vector elements.

\subsection{CSI Acquisition} \label{sec:csi}

In \ac{tdd} systems with calibrated \ac{rf} front-ends, the downlink channels are reciprocal to the uplink channels. Therefore, the \ac{bs} may obtain channel estimates from pilot sequences transmitted in uplink training slots. Let $\bm{p}_u = [p_u[0],\ldots,p_u[L-1]]^\tran$ denote the length-$L$  pilot sequence of UE $u$. We assume that the pilot sequences follow an orthogonal design, i.e., $\bm{p}_i^\hermit \bm{p}_j = L \cdot \delta(i-j)$. This orthogonality property can be found in many sequences, for example, \ac{dft} and Zadoff-Chu sequences. During the uplink training \ac{tti} $n$, the UEs simultaneously transmit their pilot sequences to the BS with power $E_\text{P}$. Thus, the received signal at the \ac{bs} can be written as
\begin{equation} \label{eq:ul_signals}
	\bm{X}[n] = \sqrt{E_\text{P}}\sum_{u=1}^U \bm{h}_u[n] \bm{p}_u^\hermit[n] + \bm{B}[n] \in \mathbb{C}^{\Mbs \times L},
\end{equation}
with $\bm{B}[n] \in \mathbb{C}^{\Mbs \times L}$ denoting the uplink complex-valued \ac{awgn} term. The elements of the noise matrix are modeled as \ac{zmcsg} random variables with variance $\sigma_b^2$. From \eqref{eq:ul_signals}, the \ac{ls} estimate of the UE $u$ channel vector is given by
\begin{equation} \label{eq:ls_model}
	\hat{\bm{h}}_u[n] = \frac{1}{L \sqrt{E_\text{P}}}\bm{X}[n] \bm{p}_u[n] = \bm{h}_u[n] + \frac{1}{L \sqrt{E_\text{P}}} \bm{B}[n] \bm{p}_u[n].
\end{equation}
We define the uplink \ac{snr} as $\gamma_{\text{UL}} = E_\text{P}/\sigma_b^2$.

\section{Precoding Methods}

This section begins with a brief review of the classical \ac{mrt} and \ac{zf} precoders. Then, these classical precoding schemes are reformulated considering the tensor approach in Section~\ref{sec:tenfilt}. Unlike previous works \cite{ribeiro_low-rank_2019} that rely on the explicit channel separability, the proposed \ac{tmrt} and \ac{tzf} precoders are based on the tensor factorization of the intended and interfering UEs' subspaces. Our results on the factorization of these subspaces are the main theoretical contribution of this paper and they are discussed in Theorems~\ref{theo:1} and \ref{theo:2}. Finally, the \ac{csi} requirements and the computational complexity of the proposed precoders are discussed in Section~\ref{sec:comp}.

\subsection{Linear Precoders}

\subsubsection{Maximum Ratio Transmition (MRT)}
The \ac{mrt} precoder $\bm{f}_{\text{MRT}, u}[n]$ is designed to maximize the received signal power at the intended user \cite{bjornson_optimal_2014}. From the Cauchy-Schwarz inequality, the \ac{mrt} precoder is given by
\begin{equation}
	\bm{f}_{\text{MRT}, u}[n] = \argmax_{\bm{f}, \|\bm{f}\|_2^2=E_{\text{Tx},u}} \left|\bm{h}_u[n]^\hermit \bm{f}\right|^2 = \frac{\sqrt{E_{\text{Tx},u}}}{\| \bm{h}_u[n] \|_2} \bm{h}_u[n],
\end{equation}
Note that the \ac{mrt} precoder does not attempt to cancel the multi-user interference.

\subsubsection{Zero-Forcing (ZF)}

The \ac{zf} precoder $\bm{f}_{\text{ZF}, u}[n]$ is designed to satisfy the zero multi-user interference condition:
\begin{equation} \label{eq:zf}
	\tilde{\bm{H}}_u[n] \bm{f}_{\text{ZF}, u}[n] = \bm{0}_{(U-1)\times 1},
\end{equation}
where 
\begin{align}
	\tilde{\bm{H}}_u[n] = [ \bm{h}_1[n], \ldots, \bm{h}_{u-1}[n], \bm{h}_{u+1}[n], \ldots, \bm{h}_U[n] ]^\hermit \label{eq:int}
\end{align}
denotes the  $(U-1)\times \Mbs$-dimensional multi-user interference channel matrix relative to UE $u$. This condition can be satisfied by projecting the \ac{mrt} precoder onto the null-space of the matrix $\tilde{\bm{H}}_u[n]$ if $\Mbs \geq U$~\cite{bdhaardt}. To this end, consider the following eigenvalue decomposition:
\begin{equation} \label{eq:gramint}
	\tilde{\bm{H}}_u^\hermit[n]\tilde{\bm{H}}_u[n] = \bm{V}_u[n] \bm{\Lambda}_u[n] \bm{V}_u^\hermit[n],
\end{equation}
with $\bm{V}_u[n] \in \mathbb{C}^{\Mbs \times \Mbs}$ denoting the eigenvector matrix, and $\bm{\Lambda}_u[n] \in \mathbb{C}^{\Mbs \times \Mbs}$ the diagonal eigenvalue matrix. The null-space projector can be expressed as
\begin{equation} \label{eq:zf_projector}
	\bm{P}_{u}[n] = \bm{I}_{\Mbs} - \tilde{\bm{V}}_u[n] \tilde{\bm{V}}_u^\hermit[n] \in \mathbb{C}^{\Mbs \times \Mbs},
\end{equation}
where $\tilde{\bm{V}}_u[n] \in \mathbb{C}^{\Mbs \times (U-1)}$ is formed by the $U-1$ dominant eigenvectors of \eqref{eq:gramint}. The \ac{zf} precoder is then given by:
\begin{gather}
	\tilde{\bm{f}}_u[n] = \bm{P}_{u}[n] \bm{h}_u[n]\\
	\bm{f}_{\text{ZF},u}[n] = \frac{\sqrt{E_{\text{Tx},u}}}{\| \tilde{\bm{f}}_u[n] \|_2} \tilde{\bm{f}}_u[n].
\end{gather}

\subsection{Tensor Precoders} \label{sec:tenfilt}

\subsubsection{Tensor Maximum Ratio Transmission (TMRT)}

The \ac{tmrt} precoder is built upon the separability of the subspace spanned by $\bm{h}_u[n]$ presented in Theorem~\ref{theo:1}. To support the demonstration of this result, we present the following two lemmas.
	
\begin{lemma} \label{lemma:vec}
	For sufficiently large Rician-$K$ factors, $\bm{h}_u[n]$, $\bm{h}_{\text{H},u}[n]$, and $\bm{h}_{\text{V},u}[n]$ may be well-approximated as
	{\small\begin{align}
		\bm{h}_u[n] &\approx \bm{h}_{u}^{\text{LOS}}[n] = e^{\jmath \psi_u[n]} \cdot \bm{G}_u[n] (\bm{a}_{\text{H},u}[n] \otimes \bm{a}_{\text{V},u}[n]) \label{eq:approx1}\\
		\bm{h}_{\text{H},u}[n] &\approx \bm{h}_{\text{H}, u}^{\text{LOS}}[n] = e^{\jmath \psi_u[n]} \cdot \bm{G}_{\text{H},u}[n] \bm{a}_{\text{H},u}[n] \label{eq:approx2}\\
		\bm{h}_{\text{V},u}[n] &\approx \bm{h}_{\text{V}, u}^{\text{LOS}}[n] = e^{\jmath \psi_u[n]} \cdot \bm{G}_{\text{V},u}[n] \bm{a}_{\text{V},u}[n] \label{eq:approx3}
		\end{align}}
\end{lemma}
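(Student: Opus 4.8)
The plan is to exploit the convex-combination structure of the Rician model and to show that the \ac{nlos} contribution becomes negligible relative to the \ac{los} term once $K$ is large, in precisely the sense demanded by the subsequent subspace factorization. First I would look at the two mixing coefficients directly: as $K \to \infty$, the \ac{los} weight $\sqrt{K/(K+1)} \to 1$ while the \ac{nlos} weight $\sqrt{1/(K+1)} \to 0$. This already motivates \eqref{eq:approx1}, but to turn it into a quantitative statement I would compare the expected powers carried by the two terms.

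The second step is to bound the expected squared norm of the scaled \ac{nlos} component against that of the scaled \ac{los} term. Since $\bm{h}_{\mathcal{N}}$ is \ac{zmcsg} with covariance $\bm{R}_{\mathcal{N}}$ and the Gauss--Markov recursion preserves this covariance in steady state, $\esp{\| \bm{h}_{u}^{\text{NLOS}}[n] \|_2^2} = \trace(\bm{R}_{\mathcal{N}})$, so the scaled \ac{nlos} term carries power $\trace(\bm{R}_{\mathcal{N}})/(K+1)$. The scaled \ac{los} term, on the other hand, has the deterministic power $\tfrac{K}{K+1}\sum_{m} g_{u,m}[n]$, since the steering-vector entries and the Doppler factor $e^{\jmath\psi_u[n]}$ all have unit modulus and only the antenna gains survive in the norm. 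Dividing, the ratio of expected \ac{nlos} to \ac{los} power equals $\tfrac{1}{K}\,\trace(\bm{R}_{\mathcal{N}})/\sum_{m} g_{u,m}[n] = O(1/K)$, which vanishes as $K$ grows; this is the formal content behind the approximation symbol in \eqref{eq:approx1}.

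The third step extends the estimate to the sub-array channels. Because $\bm{h}_{\text{H},u}[n]$ and $\bm{h}_{\text{V},u}[n]$ are extracted from $\bm{h}_u[n]$ by selecting the entries indexed by $\mathcal{I}_{\text{H}}$ and $\mathcal{I}_{\text{V}}$, and since entry selection is linear and therefore commutes with the Rician decomposition, each sub-array inherits the identical coefficients $\sqrt{K/(K+1)}$ and $\sqrt{1/(K+1)}$. The same bound then applies verbatim after replacing $\bm{R}_{\mathcal{N}}$ by its principal submatrix on the relevant index set, which remains positive semidefinite with bounded trace, and after replacing the full gain sum by the sub-array gain sum. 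This delivers \eqref{eq:approx2} and \eqref{eq:approx3} with essentially no extra effort.

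I expect the principal obstacle to be conceptual rather than computational: the stated approximations use the coefficient $1$ instead of $\sqrt{K/(K+1)}$, so a residual scaling discrepancy persists even in the limit. The hard part is justifying that this scalar is immaterial, which I would do by phrasing the claim in the directional sense actually exploited later, namely $\bm{h}_u[n]/\| \bm{h}_u[n] \|_2 \to \bm{h}_{u}^{\text{LOS}}[n]/\| \bm{h}_{u}^{\text{LOS}}[n] \|_2$, where the common positive scalar cancels. Since the precoders in Section~\ref{sec:tenfilt} normalize every filter to a fixed energy, only this direction enters the design and the leftover magnitude mismatch never reaches the precoder output; one may make the convergence rigorous either in expectation through the $O(1/K)$ power ratio above or, if desired, with high probability via a concentration bound on $\| \bm{h}_{u}^{\text{NLOS}}[n] \|_2$.
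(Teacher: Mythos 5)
Your proposal is correct and follows essentially the same route as the paper, whose entire proof is the qualitative assertion that the NLOS terms become insignificant relative to the LOS terms for sufficiently large $K$; your $O(1/K)$ power-ratio bound and the observation that entry selection commutes with the Rician decomposition simply make that assertion quantitative. One minor correction: the scaling discrepancy you flag does not in fact persist in the limit, since $\sqrt{K/(K+1)} \to 1$ makes the coefficient mismatch itself $O(1/K)$, so your directional-convergence detour, while valid and well matched to how the precoders use the lemma, is not strictly needed.
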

\begin{proof}
	The \ac{nlos} terms become insignificant relative to the \ac{los} terms for sufficiently large Rician-$K$ factors. Therefore, the considered approximation holds.
\end{proof}

\begin{lemma} \label{lemma:vec2}
	The respective basis vectors of the subspaces spanned by the approximations in \eqref{eq:approx1}--\eqref{eq:approx3} are given by $\bm{a}_{\text{H},u}[n] \otimes \bm{a}_{\text{V},u}[n]$, $\bm{a}_{\text{H},u}[n]$, and $\bm{a}_{\text{V},u}[n]$ if the diagonal antenna gain matrices $\bm{G}_u[n]$, $\bm{G}_{\text{H},u}[n]$, and $\bm{G}_{\text{V},u}[n]$ are non-singular.
\end{lemma}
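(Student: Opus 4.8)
The plan is to reduce all three claims to the elementary observation that a single nonzero vector spans a one-dimensional subspace, for which that vector---or any nonzero scalar multiple of it---serves as a basis. By Lemma~\ref{lemma:vec}, each of $\bm{h}_u[n]$, $\bm{h}_{\text{H},u}[n]$, and $\bm{h}_{\text{V},u}[n]$ is, in the large-$K$ regime, well-approximated by a single vector of the form (scalar phase) $\times$ (diagonal gain matrix) $\times$ (steering vector). I would therefore argue that neither the phase nor the gain matrix alters the relevant one-dimensional subspace, so that the steering vectors $\bm{a}_{\text{H},u}[n]\otimes\bm{a}_{\text{V},u}[n]$, $\bm{a}_{\text{H},u}[n]$, and $\bm{a}_{\text{V},u}[n]$ are the respective basis vectors.

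First I would establish that each approximating vector is nonzero, so that its span is genuinely one-dimensional rather than $\{\bm{0}\}$. The Doppler factor $e^{\jmath\psi_u[n]}$ has unit modulus, and every entry of $\bm{a}_{\text{H},u}[n]$ and $\bm{a}_{\text{V},u}[n]$ (hence of their tensor product) has unit modulus as well, so each steering vector is nonzero. Here the non-singularity hypothesis enters: a non-singular diagonal matrix maps a nonzero vector to a nonzero vector, so $\bm{G}_u[n]$, $\bm{G}_{\text{H},u}[n]$, and $\bm{G}_{\text{V},u}[n]$ preserve this nonzeroness and the three spans remain one-dimensional.

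The crux of the argument, and the step I expect to be the main obstacle, is to show that the gain matrix does not rotate the steering direction, i.e.\ that $\bm{G}_u[n](\bm{a}_{\text{H},u}[n]\otimes\bm{a}_{\text{V},u}[n])$ is a scalar multiple of $\bm{a}_{\text{H},u}[n]\otimes\bm{a}_{\text{V},u}[n]$. This becomes immediate once one uses that all array elements share the same far-field radiation pattern evaluated at the common incidence angles $(\theta_u[n],\phi_u[n])$, so that the per-element gains coincide and $\bm{G}_u[n]$ collapses to a scalar multiple of the identity (equivalently, the gains factorize sub-array-wise, $\bm{G}_u[n]=\bm{G}_{\text{H},u}[n]\otimes\bm{G}_{\text{V},u}[n]$, up to a global scalar). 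A nonzero scalar factor leaves the span unchanged, which combined with the unit-modulus phase yields $\vspan\{\bm{h}_u^{\text{LOS}}[n]\}=\vspan\{\bm{a}_{\text{H},u}[n]\otimes\bm{a}_{\text{V},u}[n]\}$, and analogously for the horizontal and vertical components. I would stress that non-singularity alone only guarantees one-dimensionality; it is the common-element-pattern modeling assumption that makes the steering vector---rather than $\bm{G}_u[n](\bm{a}_{\text{H},u}[n]\otimes\bm{a}_{\text{V},u}[n])$---the actual basis, and I would state this assumption explicitly in the proof.
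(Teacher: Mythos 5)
Your proposal follows the same route as the paper's proof, but with a rigor the paper omits, and the difference is substantive. The paper's entire proof is the one-line assertion that ``the Doppler shift and the antenna gain matrices do not change the direction of the corresponding steering vectors'' --- precisely the step you isolate as the crux and refuse to take for free. You are right to refuse: since every entry of $\bm{a}_{\text{H},u}[n]\otimes\bm{a}_{\text{V},u}[n]$ has unit modulus, $\bm{G}_u[n](\bm{a}_{\text{H},u}[n]\otimes\bm{a}_{\text{V},u}[n])$ is parallel to $\bm{a}_{\text{H},u}[n]\otimes\bm{a}_{\text{V},u}[n]$ if and only if $\bm{G}_u[n]$ is a scalar multiple of the identity, and likewise for the sub-array quantities; non-singularity by itself only guarantees that the three spans are one-dimensional, exactly as you observe. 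The hypothesis that actually carries the lemma is the one you state explicitly --- identical element patterns evaluated at the common far-field angles, so that $g_{u,m}[n]$ does not depend on $m$ --- and it is consistent with the paper's model (the gains are functions of $\theta_u[n]$ and $\phi_u[n]$ only, and the simulations use isotropic elements, $\bm{G}_u[n]=\bm{I}$). So your proof is correct under your stated assumption, while the paper's proof tacitly relies on that same assumption without acknowledging it; under non-singularity alone, the lemma's conclusion is not literally true.

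Two caveats. First, your parenthetical ``equivalently, $\bm{G}_u[n]=\bm{G}_{\text{H},u}[n]\otimes\bm{G}_{\text{V},u}[n]$ up to a global scalar'' is not an equivalence: a Kronecker-factorizable gain matrix with non-scalar factors preserves the tensor-product structure of the span (which would suffice for Theorem~\ref{theo:1}) but does not make the steering vectors themselves the basis vectors, which is what Lemma~\ref{lemma:vec2} literally asserts; only the scalar case does that. Second, your reading exposes a tension with the Remark following Theorem~\ref{theo:1}, whose second point insists that $\bm{G}_u[n]$ cannot in general be decomposed into elevation and azimuth factors --- a statement that becomes vacuous once $\bm{G}_u[n]$ is scalar. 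That tension is in the paper, not in your argument, but it confirms that making the common-pattern assumption explicit, as you do, is the honest way to prove this lemma.
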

\begin{proof}
	This result follows from the fact that the Doppler shift and the antenna gain matrices do not change the direction of the corresponding steering vectors.
\end{proof}

\begin{theorem} \label{theo:1}
	For sufficiently large Rician-$K$ factors, the subspace spanned by $\bm{h}_u[n]$ consists of the tensor product between the subspaces generated by $\bm{h}_{\text{H},u}[n]$ and $\bm{h}_{\text{V},u}[n]$.
\end{theorem}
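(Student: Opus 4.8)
The plan is to chain the two preceding lemmas so that Theorem~\ref{theo:1} reduces to a single elementary identity relating the span of a Kronecker product to the Kronecker product of spans. First I would invoke Lemma~\ref{lemma:vec} to replace $\bm{h}_u[n]$, $\bm{h}_{\text{H},u}[n]$, and $\bm{h}_{\text{V},u}[n]$ by their \ac{los} approximations, which is justified in the large-$K$ regime assumed in the statement. Applying Lemma~\ref{lemma:vec2} to these approximations then identifies the three subspaces explicitly: the full-array subspace is $\vspan(\bm{a}_{\text{H},u}[n] \otimes \bm{a}_{\text{V},u}[n])$, whereas the horizontal and vertical sub-array subspaces are $\vspan(\bm{a}_{\text{H},u}[n])$ and $\vspan(\bm{a}_{\text{V},u}[n])$, respectively. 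The non-singularity of the gain matrices assumed in Lemma~\ref{lemma:vec2} guarantees that each steering vector is nonzero, so all three subspaces are one-dimensional.

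It then remains to establish the identity
\begin{equation*}
	\vspan(\bm{a}_{\text{H},u}[n]) \otimes \vspan(\bm{a}_{\text{V},u}[n]) = \vspan(\bm{a}_{\text{H},u}[n] \otimes \bm{a}_{\text{V},u}[n]).
\end{equation*}
I would argue this directly from the definition of the tensor product of two subspaces as the span of all Kronecker products of their respective elements. Because each factor is generated by a single vector, a generic generator has the form $(\alpha\, \bm{a}_{\text{H},u}[n]) \otimes (\beta\, \bm{a}_{\text{V},u}[n])$ for scalars $\alpha,\beta$, and the bilinearity of the Kronecker product collapses this to $\alpha\beta\,(\bm{a}_{\text{H},u}[n] \otimes \bm{a}_{\text{V},u}[n])$. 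Hence every element of the left-hand side is a scalar multiple of the single vector $\bm{a}_{\text{H},u}[n] \otimes \bm{a}_{\text{V},u}[n]$, and conversely; so the two sides coincide. Combining this with the subspace identifications above yields $\vspan(\bm{h}_u[n]) = \vspan(\bm{h}_{\text{H},u}[n]) \otimes \vspan(\bm{h}_{\text{V},u}[n])$, which is the claim.

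I expect essentially all of the substantive work to reside in the two lemmas rather than in the theorem itself: once the large-$K$ approximation of Lemma~\ref{lemma:vec} and the direction preservation of Lemma~\ref{lemma:vec2} are granted, the remaining argument is the one-line bilinearity identity above. The only point that warrants care is stating precisely what is meant by the tensor product of two subspaces and verifying that the one-dimensionality of the factors makes the scalar-multiple argument airtight with no dimension counting; for higher-dimensional factors the analogous identity still holds but would require tracking a basis, which is unnecessary here.
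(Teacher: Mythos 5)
Your proposal is correct and takes essentially the same route as the paper's proof: it chains Lemma~\ref{lemma:vec} (large-$K$ replacement by \ac{los} terms) with Lemma~\ref{lemma:vec2} (gain matrices and Doppler shifts preserve directions) to identify all three subspaces with spans of steering vectors, and then closes with the rank-one tensor-product identity. The only difference is cosmetic: you spell out the identity $\vspan(\bm{a}_{\text{H},u}[n]) \otimes \vspan(\bm{a}_{\text{V},u}[n]) = \vspan(\bm{a}_{\text{H},u}[n] \otimes \bm{a}_{\text{V},u}[n])$ via bilinearity, a step the paper leaves implicit.
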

\begin{proof}
	Lemmas \ref{lemma:vec} and \ref{lemma:vec2} demonstrate that the basis vector of the subspace spanned by $\bm{h}_u[n]$ is given by the tensor product between $\bm{a}_{\text{H},u}[n]$ and $\bm{a}_{\text{V},u}[n]$. Furthermore, these steering vectors form bases for the subspaces generated by $\bm{h}_{\text{H},u}[n]$ and $\bm{h}_{\text{V},u}[n]$, respectively.
\end{proof}

\begin{nremark}
	The channel vector $\bm{h}_u[n]$ cannot be factorized into a tensor product between $\bm{h}_{\text{H},u}[n]$ and $\bm{h}_{\text{V}, u}[n]$ in general because
	\begin{enumerate}
		\item The Rayleigh \ac{nlos} term $\bm{h}_u^{\text{NLOS}}[n]$ does not have any specific separable structure;
		\item The antenna gains matrix $\bm{G}_u[n]$ cannot be decomposed into elevation and azimuth factors.
	\end{enumerate}
	However, Theorem \ref{theo:1} shows that the $1$-dimensional subspace spanned by  $\bm{h}_u[n]$ can be factorized if the conditions provided in Lemmas~\ref{lemma:vec} and \ref{lemma:vec2} are satisfied, as illustrated in Figure~\ref{fig:subspace}.
\end{nremark}

\begin{figure}[t]
	\centering
	\includegraphics[width=0.4\textwidth]{./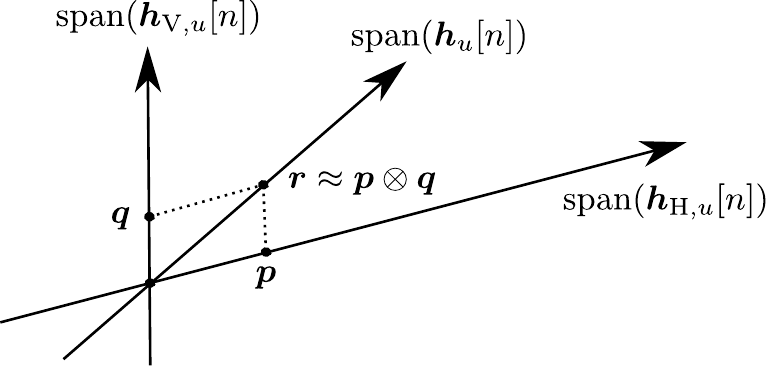}
	\caption{Illustration of Theorem~\ref{theo:1}. The vector $\bm{r} \in \vspan(\bm{h}_u[n])$ can be well approximated by the tensor product between $\bm{p} \in \vspan(\bm{h}_{\text{H},u}[n])$ and $\bm{q} \in \vspan(\bm{h}_{\text{V},u}[n])$.}
	\label{fig:subspace}
\end{figure}

Based on Theorem~\ref{theo:1}, the \ac{tmrt} precoder consists of designing sub-array precoders $\tilde{\bm{f}}_{\text{H}, u}[n]$ and $\tilde{\bm{f}}_{\text{V}, u}[n]$ that maximize the received power at the intended UE and then  combining them through the tensor product as

{\small
\begin{gather}
	\bm{f}_{\text{TMRT}, u}[n] = \tilde{\bm{f}}_{\text{H}, u}[n] \otimes \tilde{\bm{f}}_{\text{V}, u}[n]\\
	\tilde{\bm{f}}_{\text{H}, u}[n] = \argmax_{\bm{f}_\text{H}, \|\bm{f}_\text{H}\|_2^2 = E_{\text{Tx},u}^{1/2}} |\bm{h}_{\text{H},u}[n]\bm{f}_\text{H}|^2 = \tfrac{\sqrt[4]{E_{\text{Tx},u}}}{\|\bm{h}_{\text{H},u}[n]\|_2 } \bm{h}_{\text{H},u}[n]\\
	\tilde{\bm{f}}_{\text{V}, u}[n] = \argmax_{\bm{f}_\text{V}, \|\bm{f}_\text{V}\|_2^2 = E_{\text{Tx},u}^{1/2}} |\bm{h}_{\text{H},u}[n]\bm{f}_\text{V}|^2 = \tfrac{\sqrt[4]{E_{\text{Tx},u}}}{\|\bm{h}_{\text{V},u}[n]\|_2 } \bm{h}_{\text{V},u}[n].
\end{gather}}

\begin{figure*}[!t]
	{\footnotesize\begin{gather}
			\tilde{\bm{H}}_u^\hermit[n]\tilde{\bm{H}}_u[n] = \sum_{j\neq u}^U \left( \frac{K}{K+1} \bm{h}_{j}^{\text{LOS}}[n] \bm{h}_{j}^{\text{LOS}}[n]^\hermit + \frac{\sqrt{K}}{K+1} \bm{h}_{j}^{\text{LOS}}[n] \bm{h}_{j}^{\text{NLOS}}[n]^\hermit + \frac{\sqrt{K}}{K+1} \bm{h}_{j}^{\text{NLOS}}[n] \bm{h}_{j}^{\text{LOS}}[n]^\hermit + \frac{1}{K+1} \bm{h}_{j}^{\text{NLOS}}[n] \bm{h}_{j}^{\text{NLOS}}[n]^\hermit\right) \label{eq:int_gram}\\
			\tilde{\bm{H}}_{\text{H},u}^\hermit[n]\tilde{\bm{H}}_{\text{H},u}[n] = \sum_{j\neq u}^U \left( \frac{K}{K+1} \bm{h}_{\text{H},j}^{\text{LOS}}[n] \bm{h}_{\text{H},j}^{\text{LOS}}[n]^\hermit + \frac{\sqrt{K}}{K+1} \bm{h}_{\text{H},j}^{\text{LOS}}[n] \bm{h}_{\text{H},j}^{\text{NLOS}}[n]^\hermit + \frac{\sqrt{K}}{K+1} \bm{h}_{\text{H},j}^{\text{NLOS}}[n] \bm{h}_{\text{H},j}^{\text{LOS}}[n]^\hermit + \frac{1}{K+1} \bm{h}_{\text{H},j}^{\text{NLOS}}[n] \bm{h}_{\text{H},j}^{\text{NLOS}}[n]^\hermit\right) \label{eq:h_gram} \\
			\tilde{\bm{H}}_{\text{V},u}^\hermit[n]\tilde{\bm{H}}_{\text{V},u}[n] = \sum_{j\neq u}^U \left( \frac{K}{K+1} \bm{h}_{\text{V}, j}^{\text{LOS}}[n] \bm{h}_{\text{V}, j}^{\text{LOS}}[n]^\hermit + \frac{\sqrt{K}}{K+1} \bm{h}_{\text{V}, j}^{\text{LOS}}[n] \bm{h}_{\text{V}, j}^{\text{NLOS}}[n]^\hermit + \frac{\sqrt{K}}{K+1} \bm{h}_{\text{V}, j}^{\text{NLOS}}[n] \bm{h}_{\text{V}, j}^{\text{LOS}}[n]^\hermit + \frac{1}{K+1} \bm{h}_{\text{V}, j}^{\text{NLOS}}[n] \bm{h}_{\text{V}, j}^{\text{NLOS}}[n]^\hermit\right) \label{eq:v_gram}
		\end{gather}}
		\hrulefill
\end{figure*}

\subsubsection{Tensor Zero-Forcing (TZF)}

The \ac{tzf} precoder is formulated based on the results of Theorem~\ref{theo:2} and of Corollary~\ref{cor:1}. To support these results, we present Lemmas~\ref{lemma:kint}--\ref{lemma:col_int}. For future convenience, let us define the horizontal and vertical multi-user interference channel matrices
	\begin{gather}
		\tilde{\bm{H}}_{\text{H},u}[n] = [ \bm{h}_{\text{H},1}[n], \ldots, \bm{h}_{\text{H},u-1}[n], \bm{h}_{\text{H},u+1}[n], \ldots, \bm{h}_{\text{H},U}[n] ]^\hermit\\
		\tilde{\bm{H}}_{\text{V},u}[n] = [ \bm{h}_{\text{V},1}[n], \ldots, \bm{h}_{\text{V},u-1}[n], \bm{h}_{\text{V},u+1}[n], \ldots, \bm{h}_{\text{V},U}[n] ]^\hermit	
	\end{gather}
with dimensions $(U-1) \times \Mh$ and $(U-1)\times \Mv$, respectively, and the Gram matrices in \eqref{eq:int_gram}--\eqref{eq:v_gram}, shown on the top of the next page.
\begin{lemma} \label{lemma:kint}
	For sufficiently large Rician-$K$ factors, the Gram matrices \eqref{eq:int_gram}--\eqref{eq:v_gram} can be well approximated as
	\begin{gather}
	\tilde{\bm{H}}_u^\hermit[n]\tilde{\bm{H}}_u[n] \approx \sum_{j\neq u}^U  \bm{G}_j[n] \bm{R}_j[n] \bm{G}_j[n] \label{eq:gramapprox}\\
	\tilde{\bm{H}}_{\text{H},u}^\hermit[n]\tilde{\bm{H}}_{\text{H},u}[n] \approx \sum_{j\neq u}^U  \bm{G}_{\text{H},j}[n] \bm{R}_{\text{H},j}[n] \bm{G}_{\text{H},j}[n] \label{eq:gramapprox_h}\\
	\tilde{\bm{H}}_{\text{V},u}^\hermit[n]\tilde{\bm{H}}_{\text{V},u}[n] \approx \sum_{j\neq u}^U  \bm{G}_{\text{V},j}[n] \bm{R}_{\text{V},j}[n] \bm{G}_{\text{V},j}[n],\label{eq:gramapprox_v}
	\end{gather}
	with 
	\begin{gather}
	\bm{R}_j[n] = \bm{R}_{\text{H},j}[n] \otimes \bm{R}_{\text{V},j}[n] \in \mathbb{C}^{\Mbs \times \Mbs} \label{eq:rjkron}\\
	\bm{R}_{\text{H},j}[n] = \bm{a}_{\text{H},j}[n] \bm{a}_{\text{H},j}[n]^\hermit \in \mathbb{C}^{\Mh \times \Mh} \\
	\bm{R}_{\text{V},j}[n] = \bm{a}_{\text{V},j}[n] \bm{a}_{\text{V},j}[n]^\hermit \in \mathbb{C}^{\Mv \times \Mv}
	\end{gather}
\end{lemma}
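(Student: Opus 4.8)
The plan is to start from the exact expansions \eqref{eq:int_gram}--\eqref{eq:v_gram} and show that, as $K$ grows, only the \ac{los}--\ac{los} outer product survives in each summand, after which a Kronecker identity delivers the claimed factorization. First I would examine the four scalar prefactors in each summand: the \ac{los}--\ac{los} term carries $K/(K+1)\to 1$, whereas the two cross terms carry $\sqrt{K}/(K+1)\to 0$ and the \ac{nlos}--\ac{nlos} term carries $1/(K+1)\to 0$ as $K\to\infty$. Since $\bm{h}_j^{\text{LOS}}[n]$ and $\bm{h}_j^{\text{NLOS}}[n]$ are of comparable magnitude, the cross and \ac{nlos} contributions become negligible relative to the \ac{los} term. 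This is exactly the regime of Lemma~\ref{lemma:vec}, which I would invoke to retain only $\sum_{j\neq u}\bm{h}_j^{\text{LOS}}[n]\bm{h}_j^{\text{LOS}}[n]^\hermit$, and analogously for the horizontal and vertical Gram matrices.

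Next I would substitute the explicit form $\bm{h}_j^{\text{LOS}}[n]=e^{\jmath\psi_j[n]}\bm{G}_j[n](\bm{a}_{\text{H},j}[n]\otimes\bm{a}_{\text{V},j}[n])$ into the surviving outer product. The unit-modulus Doppler phase cancels, since $e^{\jmath\psi_j[n]}e^{-\jmath\psi_j[n]}=1$, and because $\bm{G}_j[n]$ is a real diagonal matrix we have $\bm{G}_j^\hermit[n]=\bm{G}_j[n]$, reducing the summand to $\bm{G}_j[n](\bm{a}_{\text{H},j}[n]\otimes\bm{a}_{\text{V},j}[n])(\bm{a}_{\text{H},j}[n]\otimes\bm{a}_{\text{V},j}[n])^\hermit\bm{G}_j[n]$. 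The key algebraic step is then the Kronecker mixed-product rule $(\bm{A}\otimes\bm{B})(\bm{C}\otimes\bm{D})=(\bm{A}\bm{C})\otimes(\bm{B}\bm{D})$ applied to the middle factor, giving $(\bm{a}_{\text{H},j}[n]\otimes\bm{a}_{\text{V},j}[n])(\bm{a}_{\text{H},j}[n]\otimes\bm{a}_{\text{V},j}[n])^\hermit=(\bm{a}_{\text{H},j}[n]\bm{a}_{\text{H},j}^\hermit[n])\otimes(\bm{a}_{\text{V},j}[n]\bm{a}_{\text{V},j}^\hermit[n])=\bm{R}_{\text{H},j}[n]\otimes\bm{R}_{\text{V},j}[n]=\bm{R}_j[n]$. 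Substituting back yields $\bm{G}_j[n]\bm{R}_j[n]\bm{G}_j[n]$, establishing \eqref{eq:gramapprox} together with the Kronecker structure \eqref{eq:rjkron}. The horizontal and vertical statements \eqref{eq:gramapprox_h}--\eqref{eq:gramapprox_v} follow by the identical manipulation, using $\bm{h}_{\text{H},j}^{\text{LOS}}[n]=e^{\jmath\psi_j[n]}\bm{G}_{\text{H},j}[n]\bm{a}_{\text{H},j}[n]$ and its vertical analogue, where no Kronecker step is needed since those steering vectors already live in a single domain.

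I expect the only genuinely delicate point to be the precise sense of the approximation. The algebra (phase cancellation, real diagonal gains, the Kronecker identity) is routine once the \ac{los}-only term is isolated. The substantive content is that the dropped terms are uniformly negligible, which requires the \ac{los} and \ac{nlos} components to have comparable norm so that the vanishing prefactors $\sqrt{K}/(K+1)$ and $1/(K+1)$ actually control the error; this is exactly the hypothesis packaged in Lemma~\ref{lemma:vec}. Making the bound quantitative, for instance bounding the spectral-norm error by $O(1/\sqrt{K})$, would sharpen the result, but for the purposes of this lemma it suffices to invoke the large-$K$ regime already established.
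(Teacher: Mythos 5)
Your proof is correct and takes essentially the same route as the paper's: discard the $\sqrt{K}/(K+1)$ cross terms and the $1/(K+1)$ NLOS term for sufficiently large $K$, then apply the Kronecker mixed-product property to the surviving LOS outer product to obtain $\bm{R}_j[n]=\bm{R}_{\text{H},j}[n]\otimes\bm{R}_{\text{V},j}[n]$ and hence $\bm{G}_j[n]\bm{R}_j[n]\bm{G}_j[n]$. Your extra bookkeeping (the prefactor limits, the Doppler phase cancellation, and $\bm{G}_j^\hermit[n]=\bm{G}_j[n]$ for the real diagonal gains) merely makes explicit what the paper's terser proof leaves implicit.
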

\begin{proof}
	The approximations in \eqref{eq:gramapprox}--\eqref{eq:gramapprox_v} are obtained by noticing that the \ac{nlos} and \ac{nlos}-\ac{los} cross terms in \eqref{eq:int_gram}--\eqref{eq:v_gram} are negligible compared to the \ac{los} component for sufficiently large Rician-$K$ factors. In \eqref{eq:gramapprox}, $\bm{R}_j[n]$ is obtained by applying the mixed-product property of the tensor product~\cite{sidiropoulos_tensor_2017} as:
	\begin{subequations}
	\begin{align}
		\bm{R}_j[n] &= \left( \bm{a}_{\text{H},j}[n] \otimes \bm{a}_{\text{V},j}[n] \right) \left( \bm{a}_{\text{H},j}[n] \otimes \bm{a}_{\text{V},j}[n] \right)^\hermit\\
					&= \left( \bm{a}_{\text{H},j}[n] \bm{a}_{\text{H},j}[n]^\hermit \right) \otimes \left( \bm{a}_{\text{V},j}[n] \bm{a}_{\text{V},j}[n]^\hermit \right)\\
					&= \bm{R}_{\text{H},j}[n] \otimes \bm{R}_{\text{V},j}[n] 
	\end{align}
	\end{subequations}
\end{proof}
\begin{lemma} \label{lemma:diag_int}
	The approximations \eqref{eq:gramapprox}--\eqref{eq:gramapprox_v} and the matrices 
	\begin{gather}
	\bm{\Theta}_u[n] = \sum_{j\neq u}^U \bm{R}_j[n] \label{eq:thetakron}\\
	\bm{\Theta}_{\text{H},u}[n] = \sum_{j\neq u}^U \bm{R}_{\text{H},j}[n]\label{eq:thetakronh}\\
	\bm{\Theta}_{\text{V},u}[n] = \sum_{j\neq u}^U \bm{R}_{\text{V},j}[n]\label{eq:thetakronv}
	\end{gather}
	span the same column-spaces, respectively, if the diagonal antenna gains matrices $\bm{G}_j[n]$, $\bm{G}_{\text{H},j}[n]$, and $\bm{G}_{\text{V},j}[n]$ are non-singular.
\end{lemma}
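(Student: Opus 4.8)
The plan is to prove the three column-space identities by reducing each to a term-by-term comparison and then spending the non-singularity hypothesis. I would treat only the full-array case, \eqref{eq:gramapprox} versus \eqref{eq:thetakron}, in detail, since the horizontal and vertical cases \eqref{eq:gramapprox_h}--\eqref{eq:gramapprox_v} versus \eqref{eq:thetakronh}--\eqref{eq:thetakronv} follow from the identical template after replacing $\bm{a}_j[n]$, $\bm{R}_j[n]$, and $\bm{G}_j[n]$ by their sub-array counterparts. First I would observe that every summand in \eqref{eq:gramapprox} and in \eqref{eq:thetakron} is Hermitian positive semi-definite; for a finite sum of such matrices there is no cancellation, so the column space of the sum equals the (Minkowski) sum of the individual column spaces. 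This holds because $\bm{x}^\hermit \big( \sum_{j\neq u} \bm{M}_j \big) \bm{x} = 0$ forces $\bm{x}^\hermit \bm{M}_j \bm{x} = 0$ for every $j$, so the null space of the sum is the intersection of the individual null spaces, and passing to orthogonal complements yields the claim. Hence it suffices to compare the column space of each $\bm{G}_j[n] \bm{R}_j[n] \bm{G}_j[n]$ with that of $\bm{R}_j[n]$.

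For the per-term comparison I would invoke Lemma~\ref{lemma:kint}, by which $\bm{R}_j[n] = \bm{a}_j[n]\bm{a}_j[n]^\hermit$ is rank one with $\bm{a}_j[n] = \bm{a}_{\text{H},j}[n]\otimes\bm{a}_{\text{V},j}[n]$, so $\vspan(\bm{R}_j[n]) = \vspan(\bm{a}_j[n])$. Since $\bm{G}_j[n]$ is real diagonal it is Hermitian, whence $\bm{G}_j[n]\bm{R}_j[n]\bm{G}_j[n] = (\bm{G}_j[n]\bm{a}_j[n])(\bm{G}_j[n]\bm{a}_j[n])^\hermit$, a rank-one matrix with column space $\vspan(\bm{G}_j[n]\bm{a}_j[n])$. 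Non-singularity of $\bm{G}_j[n]$ guarantees $\bm{G}_j[n]\bm{a}_j[n]\neq\bm{0}$, so each transformed summand is again rank one, and right-multiplication by the invertible $\bm{G}_j[n]$ leaves a column space unchanged while left-multiplication maps it bijectively; this is exactly where the hypothesis on the gain matrices is consumed and where the rank is seen to be preserved.

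The hard part will be the final identification $\vspan(\bm{G}_j[n]\bm{a}_j[n]) = \vspan(\bm{a}_j[n])$ for each $j$, i.e.\ that the diagonal gain weighting leaves the one-dimensional steering direction invariant. A generic invertible diagonal $\bm{G}_j[n]$ rotates $\bm{a}_j[n]$ out of its own span, so non-singularity by itself preserves only the dimension, not the subspace; the identity becomes exact precisely when each per-user element pattern is essentially constant across the array, so that $\bm{G}_j[n]$ acts as a non-zero scalar multiple of the identity on the relevant direction (equivalently, the element gains are absorbed into the path gain). I would therefore close the argument in that regime and, in addition, require the active steering vectors $\{\bm{a}_j[n] : j\neq u\}$ to be linearly independent, so that the Minkowski sum of the per-term spans attains the expected dimension $U-1$ and the two total column spaces coincide rather than merely sharing the same rank. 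The horizontal and vertical statements then follow verbatim with $\bm{a}_{\text{H},j}[n]$ and $\bm{a}_{\text{V},j}[n]$ in place of $\bm{a}_j[n]$.
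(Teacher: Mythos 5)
Your argument is correct as far as it goes, and it is considerably more careful than the proof the paper records. The paper disposes of the lemma in one sentence: the non-singular diagonal scaling ``does not change the direction of the eigenvectors'' of $\bm{R}_j[n]$, $\bm{R}_{\text{H},j}[n]$, and $\bm{R}_{\text{V},j}[n]$. That sentence silently asserts exactly the per-term identity $\vspan(\bm{G}_j[n]\bm{a}_j[n]) = \vspan(\bm{a}_j[n])$ that you single out as the hard step, and, as you observe, that identity is false for a generic non-scalar invertible diagonal matrix: with $\bm{a} = (1,1)^\tran$ and $\bm{G} = \Diag(1,2)$ the vector $\bm{G}\bm{a}$ leaves $\vspan(\bm{a})$. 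Your two preparatory steps --- the no-cancellation property of sums of positive semi-definite matrices, which reduces the claim to a term-by-term comparison, and the rank-one factorization $\bm{G}_j[n]\bm{R}_j[n]\bm{G}_j[n] = (\bm{G}_j[n]\bm{a}_j[n])(\bm{G}_j[n]\bm{a}_j[n])^\hermit$ --- supply structure the paper leaves implicit, and your diagnosis pinpoints a genuine gap in the paper's own argument: non-singularity preserves the rank of each summand but not its range, and in fact the two sums can even have different total ranks when the $\bm{G}_j[n]$ differ across users (take $\bm{a}_1=(1,1)^\tran$, $\bm{a}_2=(1,2)^\tran$, $\bm{G}_1=\bm{I}$, $\bm{G}_2=\Diag(1,\tfrac{1}{2})$, which collapses the transformed span to one dimension).

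Your repair --- requiring each $\bm{G}_j[n]$ to act as a non-zero scalar on $\bm{a}_j[n]$, i.e., per-user element gains essentially uniform across the array --- is the condition under which the lemma becomes an exact statement, and it matches the regime in which the paper actually operates (its simulations use isotropic elements, for which every $\bm{G}_j[n]$ is a positive multiple of $\bm{I}_{\Mbs}$); short of that, the conclusion can only be claimed approximately, for near-uniform gains, in the same spirit as the surrounding approximate lemmas. One small correction: the additional hypothesis that $\{\bm{a}_j[n]\}_{j\neq u}$ be linearly independent is not needed for the column spaces to coincide --- once each per-term span is preserved, the sums of those spans are identical whatever their dimension. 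Independence only guarantees that this common column space has the full dimension $U-1$, which is what the projector construction following Theorem~\ref{theo:2} (retaining the $U-1$ dominant eigenvectors) tacitly assumes; that assumption belongs there, not in this lemma.
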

\begin{proof}
	The non-singular scaling performed by the diagonal antenna gains matrices in \eqref{eq:gramapprox}--\eqref{eq:gramapprox_v} do not change the direction of the eigenvectors of $\bm{R}_{j}[n]$, $\bm{R}_{\text{H},j}[n]$, and $\bm{R}_{\text{V},j}[n]$.
\end{proof}
\begin{lemma} \label{lemma:col_int}
	The column-space of $\bm{\Theta}_u[n]$ can be factorized into the tensor product between the column-spaces of $\bm{\Theta}_{\text{H},u}[n]$ and $\bm{\Theta}_{\text{V},u}[n]$ if the elevation angles that locate the interfering UEs are approximately equal.
\end{lemma}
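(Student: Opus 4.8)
The plan is to compute the three column-spaces explicitly in terms of the sub-array steering vectors, and then to show that the equal-elevation hypothesis collapses the vertical subspace to a single dimension, which is exactly what removes the cross terms obstructing the factorization. First I would characterize each range. Since $\bm{R}_j[n]$, $\bm{R}_{\text{H},j}[n]$, and $\bm{R}_{\text{V},j}[n]$ are rank-one outer products, and since the range of a sum of positive-semidefinite rank-one matrices is the span of their generating vectors, I obtain
\begin{gather}
	\vspan(\bm{\Theta}_u[n]) = \vspan\{ \bm{a}_{\text{H},j}[n] \otimes \bm{a}_{\text{V},j}[n] : j \neq u \}, \nonumber\\
	\vspan(\bm{\Theta}_{\text{H},u}[n]) = \vspan\{ \bm{a}_{\text{H},j}[n] : j \neq u \}, \nonumber\\
	\vspan(\bm{\Theta}_{\text{V},u}[n]) = \vspan\{ \bm{a}_{\text{V},j}[n] : j \neq u \}. \nonumber
\end{gather}

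Next I would invoke the equal-elevation assumption. The crucial structural observation is that the vertical steering vector $\bm{a}_{\text{V},j}[n]$ depends on the UE location only through $\sin\theta_j[n]$, whereas the horizontal vector $\bm{a}_{\text{H},j}[n]$ additionally depends on the azimuth $\phi_j[n]$ through $\cos\theta_j[n]\cos\phi_j[n]$. Consequently, if the interfering elevations are approximately equal, $\theta_j[n] \approx \theta_0[n]$, then all vertical steering vectors coincide with a single common vector, $\bm{a}_{\text{V},j}[n] \approx \bm{a}_{\text{V},0}[n]$, while the horizontal vectors remain distinct (their azimuth dependence is untouched). Hence $\vspan(\bm{\Theta}_{\text{V},u}[n])$ collapses to the one-dimensional subspace $\vspan\{\bm{a}_{\text{V},0}[n]\}$.

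Substituting this common vertical factor into the expression for $\vspan(\bm{\Theta}_u[n])$ then gives
\begin{equation}
	\vspan(\bm{\Theta}_u[n]) \approx \vspan\{ \bm{a}_{\text{H},j}[n] \otimes \bm{a}_{\text{V},0}[n] : j \neq u \}. \nonumber
\end{equation}
Because $\bm{a}_{\text{V},0}[n]$ is a fixed common factor, the bilinearity of the tensor product lets me pull it out, so this subspace equals $\vspan\{\bm{a}_{\text{H},j}[n] : j\neq u\} \otimes \vspan\{\bm{a}_{\text{V},0}[n]\}$, which is precisely $\vspan(\bm{\Theta}_{\text{H},u}[n]) \otimes \vspan(\bm{\Theta}_{\text{V},u}[n])$, establishing the claimed factorization.

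The hard part will be justifying why \emph{approximate} equality of the elevations suffices, rather than merely identifying the algebra. In the general (arbitrary-elevation) case the diagonal set $\{\bm{a}_{\text{H},j}[n]\otimes\bm{a}_{\text{V},j}[n]\}$ spans only a $(U-1)$-dimensional slice, whereas the full tensor-product subspace $\vspan(\bm{\Theta}_{\text{H},u}[n])\otimes\vspan(\bm{\Theta}_{\text{V},u}[n])$ contains \emph{all} cross combinations $\bm{a}_{\text{H},i}[n]\otimes\bm{a}_{\text{V},k}[n]$ and may have dimension as large as $(U-1)^2$; the two subspaces cannot coincide. The equal-elevation condition is the mechanism that forces the vertical span to one dimension and thereby annihilates these cross terms. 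Making this rigorous requires a perturbation argument quantifying how close the elevations must be so that the first-order deviation of the vertical steering vectors stays within the subspace tolerance implied by the ``$\approx$'' in Lemma~\ref{lemma:kint}; it is this quantitative control of the collapse, not the span identity itself, that is the delicate point.
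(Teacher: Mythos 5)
Your proof is correct and follows essentially the same route as the paper: both arguments hinge on the observation that approximately equal elevations make the vertical factors common, so that this common factor can be pulled out of the Kronecker sum $\sum_{j\neq u}\bm{R}_{\text{H},j}[n]\otimes\bm{R}_{\text{V},j}[n]$. The only difference is presentational --- the paper factorizes at the matrix level, writing $\bm{\Theta}_u[n]\approx\bm{\Theta}_{\text{H},u}[n]\otimes\bm{\Theta}_{\text{V},u}[n]$ (strictly speaking off by the harmless scalar $U-1$, which is irrelevant for column spaces), whereas you argue directly on the column spaces via the rank-one generators, which is a slightly more careful formulation of the same idea; your dimension-count explaining why the cross terms obstruct the factorization for general elevations is a useful remark the paper leaves implicit.
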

\begin{proof}
	By inserting \eqref{eq:rjkron} into \eqref{eq:thetakron}, we have
	\begin{equation} \label{eq:thetaeita}
	\bm{\Theta}_u[n] = \sum_{j\neq u}^U  \bm{R}_{\text{H},j}[n] \otimes \bm{R}_{\text{V},j}[n].
	\end{equation}
	If the interfering users' elevation angles $\theta_j[n]$ are alike, then the matrices $\bm{R}_{\text{V},j}[n]$ are almost equivalent for all $j \neq u$. This assumption allows us to approximately factorize \eqref{eq:thetaeita} as
	\begin{equation}
	\bm{\Theta}_u[n] \approx \left( \sum_{j\neq u}^U  \bm{R}_{\text{H},j}[n] \right) \otimes \bm{\Theta}_{\text{V},u}[n] = \bm{\Theta}_{\text{H},u}[n] \otimes \bm{\Theta}_{\text{V},u}[n].		
	\end{equation}
\end{proof}
\begin{theorem} \label{theo:2}
	If the elevation angles that locate the interfering UEs are sufficiently close and if the Rician-$K$ factor is sufficiently large, then the column-space of $\tilde{\bm{H}}_u^\hermit[n]\tilde{\bm{H}}_u[n]$ is approximately formed by the tensor product of the column-spaces of $\tilde{\bm{H}}_{\text{H},u}^\hermit[n]\tilde{\bm{H}}_{\text{H},u}[n]$ and $\tilde{\bm{H}}_{\text{V},u}^\hermit[n]\tilde{\bm{H}}_{\text{V},u}[n]$.
\end{theorem}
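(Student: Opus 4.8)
The plan is to chain Lemmas~\ref{lemma:kint}--\ref{lemma:col_int} so that each clause of the hypothesis is discharged by exactly one of them. First I would invoke Lemma~\ref{lemma:kint}: for a sufficiently large Rician-$K$ factor, the three Gram matrices $\tilde{\bm{H}}_u^\hermit[n]\tilde{\bm{H}}_u[n]$, $\tilde{\bm{H}}_{\text{H},u}^\hermit[n]\tilde{\bm{H}}_{\text{H},u}[n]$, and $\tilde{\bm{H}}_{\text{V},u}^\hermit[n]\tilde{\bm{H}}_{\text{V},u}[n]$ are well-approximated by the $\bm{G}$-scaled rank-one sums in \eqref{eq:gramapprox}--\eqref{eq:gramapprox_v}. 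This step alone consumes the large-$K$ assumption and replaces the non-separable full Gram matrices with objects whose summands carry the Kronecker structure $\bm{R}_j[n] = \bm{R}_{\text{H},j}[n] \otimes \bm{R}_{\text{V},j}[n]$.

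Next I would apply Lemma~\ref{lemma:diag_int} to each of the three approximations. Since the diagonal antenna-gain matrices $\bm{G}_j[n]$, $\bm{G}_{\text{H},j}[n]$, and $\bm{G}_{\text{V},j}[n]$ are non-singular, the symmetric pre- and post-multiplication by them leaves the column-spaces invariant, so the column-spaces of the three Gram matrices coincide, respectively, with those of $\bm{\Theta}_u[n]$, $\bm{\Theta}_{\text{H},u}[n]$, and $\bm{\Theta}_{\text{V},u}[n]$ from \eqref{eq:thetakron}--\eqref{eq:thetakronv}. This reduces the statement about the scaled, noise-perturbed Gram matrices to a purely geometric statement about the unscaled interference-correlation sums.

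The decisive step is Lemma~\ref{lemma:col_int}, which is where the remaining hypothesis---closeness of the interfering UEs' elevation angles---enters. Under that assumption the vertical factors $\bm{R}_{\text{V},j}[n]$ are essentially common across $j \neq u$, so they factor out of the sum in \eqref{eq:thetaeita} and give $\vspan(\bm{\Theta}_u[n]) \approx \vspan(\bm{\Theta}_{\text{H},u}[n]) \otimes \vspan(\bm{\Theta}_{\text{V},u}[n])$. Combining this with the three column-space identifications from Lemma~\ref{lemma:diag_int} yields the desired $\vspan(\tilde{\bm{H}}_u^\hermit[n]\tilde{\bm{H}}_u[n]) \approx \vspan(\tilde{\bm{H}}_{\text{H},u}^\hermit[n]\tilde{\bm{H}}_{\text{H},u}[n]) \otimes \vspan(\tilde{\bm{H}}_{\text{V},u}^\hermit[n]\tilde{\bm{H}}_{\text{V},u}[n])$, completing the argument.

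I expect the main obstacle to lie in the approximate factorization of Lemma~\ref{lemma:col_int} rather than in the bookkeeping of the first two steps. A sum of Kronecker products $\sum_{j\neq u}^U \bm{R}_{\text{H},j}[n] \otimes \bm{R}_{\text{V},j}[n]$ does not, in general, have a column-space equal to the tensor product of the column-spaces of $\sum_{j\neq u}^U \bm{R}_{\text{H},j}[n]$ and $\sum_{j\neq u}^U \bm{R}_{\text{V},j}[n]$; the factorization becomes exact only in the limiting case where the $\bm{R}_{\text{V},j}[n]$ collapse to a single common matrix. Making the passage from nearly-equal elevation angles $\theta_j[n]$ to this subspace factorization fully rigorous would require bounding how the spread of the elevation angles perturbs the dominant subspace of $\bm{\Theta}_u[n]$ away from the exact tensor product; the present argument sidesteps this perturbation analysis by treating the vertical correlation matrices $\bm{R}_{\text{V},j}[n]$ as exactly equal across the interferers.
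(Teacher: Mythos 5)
Your proof follows essentially the same route as the paper's: it chains Lemma~\ref{lemma:kint} (large-$K$ approximation), Lemma~\ref{lemma:diag_int} (column-space invariance under the non-singular gain matrices), and Lemma~\ref{lemma:col_int} (tensor factorization under nearly-equal elevation angles) in exactly the order the authors do. Your closing remark that the paper's argument implicitly treats the $\bm{R}_{\text{V},j}[n]$ as identical and omits the perturbation analysis needed to quantify the approximation is a fair observation about the paper's own level of rigor, not a gap in your reconstruction of it.
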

\begin{proof}
	Lemmas~\ref{lemma:kint} and \ref{lemma:diag_int} demonstrate that the column-spaces of \eqref{eq:gramapprox}--\eqref{eq:gramapprox_v} and \eqref{eq:thetakron}--\eqref{eq:thetakronv} are approximately the same for sufficiently large Rician-$K$ factors, respectively. Lemma \ref{lemma:col_int} shows that the subspaces of \eqref{eq:gramapprox}--\eqref{eq:gramapprox_v} are connected through a tensor product. Therefore, the column-space of $\tilde{\bm{H}}_u^\hermit[n]\tilde{\bm{H}}_u[n]$ is spanned by the tensor product between $\tilde{\bm{H}}_{\text{H},u}^\hermit[n]\tilde{\bm{H}}_{\text{H},u}[n] $ and $\tilde{\bm{H}}_{\text{V},u}^\hermit[n]\tilde{\bm{H}}_{\text{V},u}[n]$.
\end{proof}
\begin{corollary} \label{cor:1}
	Let the matrices $\bm{K}$, $\bm{K}_\text{H}$, and $\bm{K}_\text{V}$ denote projectors to the column-spaces of $\tilde{\bm{H}}_u^\hermit[n]\tilde{\bm{H}}_u[n]$, $\tilde{\bm{H}}_{\text{H},u}^\hermit[n]\tilde{\bm{H}}_{\text{H},u}[n] $, and  $\tilde{\bm{H}}_{\text{V},u}^\hermit[n]\tilde{\bm{H}}_{\text{V},u}[n]$, respectively. If the conditions of  Theorem~\ref{theo:2} hold, then $\bm{K} \approx \bm{K}_\text{H} \otimes \bm{K}_\text{V}$.
\end{corollary}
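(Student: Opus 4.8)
The plan is to reduce the statement to the purely algebraic fact that the orthogonal projector onto a Kronecker-structured subspace factorizes as the Kronecker product of the projectors onto its two factors, and then to invoke Theorem~\ref{theo:2} to transfer this identity from the idealized tensor-product subspace to the actual column-space of $\tilde{\bm{H}}_u^\hermit[n]\tilde{\bm{H}}_u[n]$, up to the same approximation error. Throughout, $\bm{K}$, $\bm{K}_\text{H}$, and $\bm{K}_\text{V}$ are treated as orthogonal (Hermitian, idempotent) projectors onto the respective column-spaces.

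First I would fix orthonormal bases for the two sub-array column-spaces. Let $\bm{U}_\text{H}$ and $\bm{U}_\text{V}$ collect orthonormal bases of the column-spaces of $\tilde{\bm{H}}_{\text{H},u}^\hermit[n]\tilde{\bm{H}}_{\text{H},u}[n]$ and $\tilde{\bm{H}}_{\text{V},u}^\hermit[n]\tilde{\bm{H}}_{\text{V},u}[n]$, so that $\bm{K}_\text{H} = \bm{U}_\text{H}\bm{U}_\text{H}^\hermit$ and $\bm{K}_\text{V} = \bm{U}_\text{V}\bm{U}_\text{V}^\hermit$. The key observation is that $\bm{U}_\text{H} \otimes \bm{U}_\text{V}$ again has orthonormal columns: by the mixed-product property of the Kronecker product~\cite{sidiropoulos_tensor_2017}, $(\bm{U}_\text{H} \otimes \bm{U}_\text{V})^\hermit (\bm{U}_\text{H} \otimes \bm{U}_\text{V}) = (\bm{U}_\text{H}^\hermit \bm{U}_\text{H}) \otimes (\bm{U}_\text{V}^\hermit \bm{U}_\text{V}) = \bm{I} \otimes \bm{I}$. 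Since these columns also span the tensor product of the two column-spaces, $\bm{U}_\text{H} \otimes \bm{U}_\text{V}$ is an orthonormal basis of that product space, and the associated orthogonal projector is, using the mixed-product property once more,
\begin{equation}
	(\bm{U}_\text{H} \otimes \bm{U}_\text{V})(\bm{U}_\text{H} \otimes \bm{U}_\text{V})^\hermit = (\bm{U}_\text{H}\bm{U}_\text{H}^\hermit) \otimes (\bm{U}_\text{V}\bm{U}_\text{V}^\hermit) = \bm{K}_\text{H} \otimes \bm{K}_\text{V}.
\end{equation}
This establishes the factorization \emph{exactly} whenever the subspace being projected onto coincides with the tensor product of the two sub-array column-spaces.

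The final step is to connect this idealized projector with $\bm{K}$. By Theorem~\ref{theo:2}, under the stated conditions on the interfering UEs' elevation angles and on the Rician-$K$ factor, the column-space of $\tilde{\bm{H}}_u^\hermit[n]\tilde{\bm{H}}_u[n]$ is approximately the tensor product of the two sub-array column-spaces. Hence the range of $\bm{K}$ is approximately the range of $\bm{K}_\text{H} \otimes \bm{K}_\text{V}$, and combining this with the exact identity above yields $\bm{K} \approx \bm{K}_\text{H} \otimes \bm{K}_\text{V}$.

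I expect the main obstacle to be making the passage from \emph{approximately equal subspaces} to \emph{approximately equal projectors} precise, since a small perturbation of a subspace can in principle produce a non-negligible change in its orthogonal projector when the relevant eigenvalues of the Gram matrices lie near the truncation threshold. A fully rigorous argument would quantify this through the principal angles between the true and tensor-structured subspaces, equivalently bounding $\| \bm{K} - \bm{K}_\text{H}\otimes\bm{K}_\text{V} \|$ by the subspace distance. At the level of approximation adopted in this paper, however, it suffices to observe that the identity is exact once the subspaces coincide, so the projector discrepancy is controlled by, and vanishes together with, the subspace approximation error already quantified in Theorem~\ref{theo:2}.
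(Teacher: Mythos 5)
Your proof is correct and supplies exactly the argument the paper leaves implicit: the paper states Corollary~\ref{cor:1} without a separate proof, treating it as immediate from Theorem~\ref{theo:2}, and your chain — orthonormal bases $\bm{U}_\text{H},\bm{U}_\text{V}$, the mixed-product property to show $\bm{U}_\text{H}\otimes\bm{U}_\text{V}$ is an orthonormal basis of the tensor-product subspace, hence $\bm{K}_\text{H}\otimes\bm{K}_\text{V}$ is exactly its orthogonal projector — is precisely the reasoning the paper relies on (indeed, the same mixed-product device it uses in proving Lemma~\ref{lemma:kint}, and the same construction it then instantiates with the dominant-eigenvector matrices $\tilde{\bm{V}}_{\text{H},u}[n]$ and $\tilde{\bm{V}}_{\text{V},u}[n]$). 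Your closing caveat, that passing from approximately equal subspaces to approximately equal projectors requires control of the principal angles, is a fair point that exceeds the paper's own level of rigor, which works entirely at the qualitative ``$\approx$'' level.
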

As in the classical \ac{zf} precoder, we build a projector to the null-space of the multi-user interference channel matrix $\tilde{\bm{H}}_u[n]$ using the results of Theorem 2 and Corollary 1. First, consider the eigenvalue decompositions of \eqref{eq:h_gram} and \eqref{eq:v_gram}
\begin{gather}
	\tilde{\bm{H}}_{\text{H},u}^\hermit[n]\tilde{\bm{H}}_{\text{H},u}[n] = \bm{V}_{\text{H},u}[n] \bm{\Lambda}_{\text{H},u}[n] \bm{V}_{\text{H},u}^\hermit[n] \label{eq:mat1}\\
	\tilde{\bm{H}}_{\text{V},u}^\hermit[n]\tilde{\bm{H}}_{\text{V},u}[n] = \bm{V}_{\text{V},u}[n] \bm{\Lambda}_{\text{V},u}[n] \bm{V}_{\text{V},u}^\hermit[n], \label{eq:mat2}
\end{gather}
and the corresponding column-space projectors
\begin{gather}
	\bm{K}_{\text{H},u}[n] = \tilde{\bm{V}}_{\text{H},u}[n] \tilde{\bm{V}}^\hermit_{\text{H},u}[n] \in \mathbb{C}^{\Mh \times \Mh}\\
	\bm{K}_{\text{V},u}[n] = \tilde{\bm{V}}_{\text{V},u}[n] \tilde{\bm{V}}^\hermit_{\text{V},u}[n] \in \mathbb{C}^{\Mv \times \Mv},
\end{gather}
where $\tilde{\bm{V}}_{\text{H},u}[n] \in \mathbb{C}^{\Mh \times (U-1)}$ and $\tilde{\bm{V}}_{\text{V},u}[n] \in \mathbb{C}^{\Mv \times (U-1)}$ contain the $U-1$ dominant eigenvectors of $\bm{V}_{\text{H},u}[n]$ and $\bm{V}_{\text{V},u}[n]$, respectively. From Corollary 1, the column-space projector of $\tilde{\bm{H}}_u^\hermit[n]\tilde{\bm{H}}_u[n]$ may be approximated as $\bm{K} \approx \bm{K}_\text{H} \otimes \bm{K}_\text{V}$. Then, the null-space projector \eqref{eq:zf_projector} can be approximated as
\begin{equation}
	\bm{P}_u[n] \approx \bm{P}_{\text{TZF},u}[n] = \bm{I}_{\Mbs} - \bm{K}_{\text{H},u}[n] \otimes \bm{K}_{\text{V},u}[n].
\end{equation}
Motivated by Theorem~\ref{theo:1}, we project the tensor product $\bm{h}_{\text{H},u}[n] \otimes \bm{h}_{\text{V,u}}[n]$ onto the multi-user interference channel matrix's null space. Hence, the \ac{tzf} precoder is then given by
\begin{gather}
\overline{\bm{f}}_u[n] = \bm{P}_{\text{TZF},u}[n] (\bm{h}_{\text{H},u}[n] \otimes \bm{h}_{\text{V,u}}[n])\\
\bm{f}_{\text{TZF},u}[n] = \frac{\sqrt{E_{\text{Tx},u}}}{\| \overline{\bm{f}}_u[n] \|_2} \overline{\bm{f}}_u[n].
\end{gather}
Note that the null spaces of \eqref{eq:mat1} and \eqref{eq:mat2} exist if and only if $\Mh$ and $\Mv$ are larger than $U-1$. Therefore, the feasibility condition for the \ac{tzf} precoder can be formulated as $\min(\Mh, \Mv) > U-1$.

\begin{figure}
	\centering
	\input{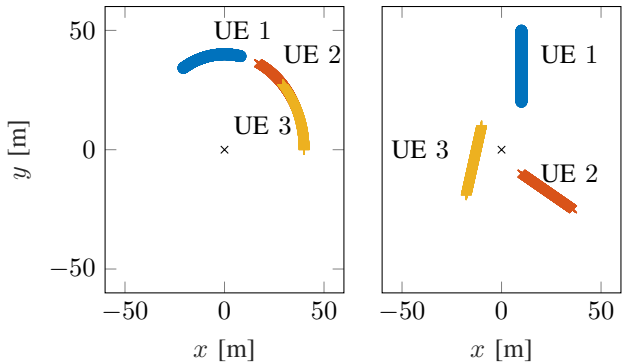}
	\caption{Circular (left) and linear (right) tracks. The BS array is located at $(0,0,25)$~m and the height of each UE is $1.5$ m.}
	\label{fig:tracks}
\end{figure}

\subsection{CSI Requirements}
\label{sec:csitreq}

The \ac{csi} typically required by linear precoders is the intended UE channel vector $\bm{h}_u[n]$ and possibly the interfering ones $\{\bm{h}_{j}[n]\,|\, j =1,\ldots,U,\,j\neq u\}$. This information can be obtained as detailed in Section~\ref{sec:csi}, for example. By contrast the proposed tensor precoders are based only on the $\Mh$- and $\Mv$-dimensional sub-array channel vectors $\bm{h}_{\text{H},u}[n]$ and $\bm{h}_{\text{V},u}[n]$, respectively. The total number of parameters to be estimated by the tensor approach is proportional to $\Mh + \Mv$, while that number is proportional to  $\Mh\cdot\Mv$ in the linear approach. Therefore, the tensor approach is less expensive than the linear one, as the \ac{csi} acquisition cost is usually proportional to the number of channel coefficients.

\subsection{Complexity Analysis} \label{sec:comp}

The \ac{mrt} precoder performs $2\Mbs + 1$ operations, and the \ac{tmrt} precoder carries out additional $\Mbs$ multiplications due to the tensor product, $3\Mbs + 1$ in total. Therefore, the linear and tensor \ac{mrt} precoders are comparable in terms of number of computations. The computational complexity of the \ac{zf}-based precoders is dominated by the calculation of the projection matrices. The number of multiplications required by this calculation is cubic with the channel vector length. Therefore, the \ac{zf} precoder performs $O(\Mbs^3) = O((\Mh\cdot\Mv)^3)$ multiplications, whereas the \ac{tzf} only $O(\Mh^3) + O(\Mv^3)$.

\section{Simulations}

In this section, we present simulation results to evaluate the channel subspace separability results and the proposed tensor precoders. We assume a single \ac{bs} with a half-wavelength \ac{upa} of $\Mbs = 16 \times 16$ isotropic  antennas serving $U=3$ single-antenna UEs. The \ac{tdd} frames are divided into uplink and downlink slots with \ac{tti} of $1$ ms, and the carrier frequency is $6$ GHz. The uplink and downlink \ac{snr} are $20$ dB and~$10$~dB, respectively, and the \ac{los} angle spreads in \eqref{eq:lostheta}, \eqref{eq:losphi} are $\sigma_\theta = \sigma_\phi =1^\circ$. We consider the circular and linear tracks depicted in Figure~\ref{fig:tracks} for the UE trajectories. The modulus of the UE speed vectors is constant $\| \bm{v}_u \|_2= 30$ m/s for $u=1,2,3$. The elevation and azimuth angles corresponding to the trajectory of UE $1$ during $1000$ TTIs ($=1$ second) are shown in Figure~\ref{fig:angles}. In the circular track, we observe that the elevation angle remains constant, whereas the azimuth angle grows. This track is therefore useful for investigating the evolution of the horizontal and vertical subspaces. The linear track describes a more realistic scenario, where both azimuth and elevation angles grow.

\begin{figure}
	\centering
	\input{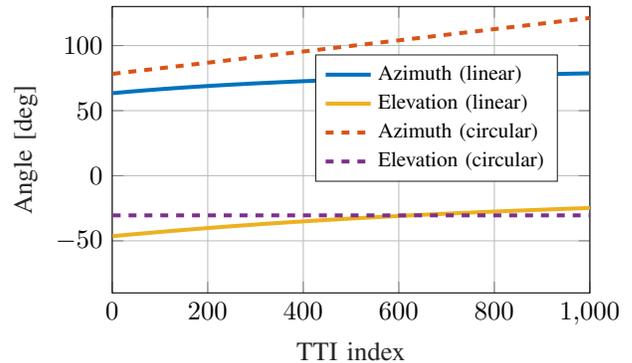}
	\caption{Angles for linear and circular tracks of UE $1$.}
	\label{fig:angles}
\end{figure}

In the first experiment scenario, we investigate the evolution of subspaces spanned by the channel vectors $\bm{h}_u[n]$, $\bm{h}_{\text{H},u}[n]$, and $\bm{h}_{\text{V},u}[n]$. To quantify the subspace evolution, we consider the chordal distance between the dominant eigenvectors of the channel correlation matrices. More specifically, the chordal distance between the eigenvectors $\bm{u}[m]$ and $\bm{u}[n]$ of same length is defined as $d_\text{C}^2(\bm{u}[m], \bm{u}[n]) = 1 - \left| \bm{u}^\hermit[m] \bm{u}[n] \right|^2$~\cite{schwarz_performance_2016}. In Figure~\ref{fig:dchord:ue}, the chordal distance is calculated relative to the eigenvector at \ac{tti} $n=0$ for the subspaces of the full, horizontal, and vertical channel vectors. The eigenvectors are estimated by averaging $256$ channel realizations for each \ac{tti}. The chordal distances relative to $\bm{h}_1[n]$, $\bm{h}_{\text{H},1}[n]$, and $\bm{h}_{\text{V},1}[n]$ are plotted in Figure~\ref{fig:dchord:ue} for Rician-$K$ factors of $0$~dB and $20$~dB considering the circular track. This result suggests that the vertical component, which corresponds to the elevation angle, remains constant, while the full and horizontal components evolve in time. This behavior is observed for both Rician-$K$ factors. We extend this experiment to evaluate the column-space evolution of the multi-user interference channel matrices \eqref{eq:int_gram}--\eqref{eq:v_gram}. These matrices are calculated relative to UE $1$ assuming the circular track scenario and $K=20$ dB for all UEs, as depicted in Figure~\ref{fig:tracks}. This result indicates that the column-space of the multi-user interfering channel matrix does not evolve. This is because the interfering UEs are characterized by the same elevation angle. Therefore, the approximation in Lemma~\ref{lemma:kint} holds and the conditions given in Theorem~\ref{theo:2} are satisfied.

\begin{figure}
	\centering
	\input{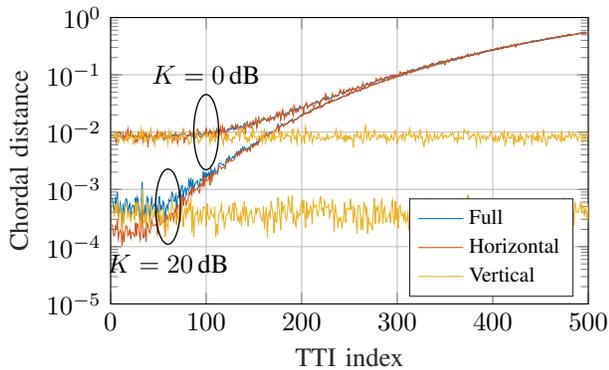}
	\caption{Chordal distance evolution of channel subspace with circular tracks.}
	\label{fig:dchord:ue}
\end{figure}
~
\begin{figure}
	\centering
	\input{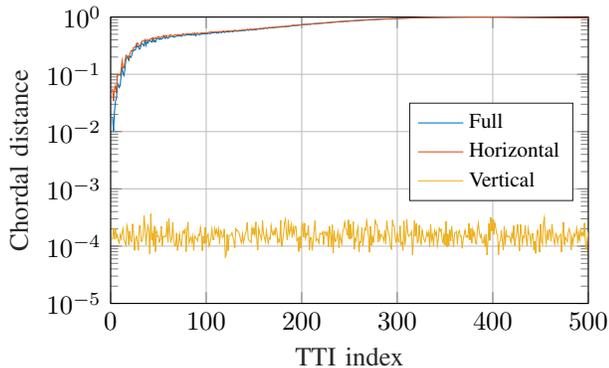}
	\caption{Chordal distance of multi-user interference channel subspace with circular tracks,  $K=20$ dB.}
	\label{fig:dchord:int}
\end{figure}

In the second experiment scenario, we evaluate the proposed tensor precoders for both circular and linear tracks and $K=20$ dB. As observed in Figures~\ref{fig:dchord:ue} and \ref{fig:dchord:int} for the circular track, the elevation component evolves slowly relative to the azimuth component. In this case, it is reasonable to think that it is not necessary to update the vertical precoders as often as the horizontal precoders. To test this hypothesis, we carried out experiments where the precoders are designed at uplink slots (allocated every $250$ TTIs) and employed in the successive downlink slots. At each uplink \ac{tti}, the \ac{bs} acquires the \ac{csi} of each UE as described in Section \ref{sec:csi}, designs the benchmark linear precoder (MRT or ZF) and the corresponding tensor precoder. We consider two implementations of the tensor precoders. The standard implementation consists of updating both horizontal (H) and vertical (V) tensor precoder filters at each uplink slot. The second implementation, by contrast, calculates the vertical component at the first uplink slot and, afterward, updates only the horizontal component.

In Figures~\ref{fig:mrt:circ} and \ref{fig:zf:circ}, the MRT- and ZF-based precoders are evaluated for circular tracks, respectively. As observed in Figures~\ref{fig:dchord:ue} and \ref{fig:dchord:int}, the elevation subspace does not change in circular tracks. Consequently, implementations of the \ac{tmrt} and \ac{tzf} precoders perform the same. Figure~\ref{fig:mrt:circ} shows that the linear and tensor approaches exhibit the same performance, empirically confirming the validity of Theorem~\ref{theo:1}. Regarding the ZF-based precoders in Figure~\ref{fig:zf:circ}, we observe that the tensor precoders exhibit a small loss compared to the linear~precoders.

\begin{figure}
	\centering
	\input{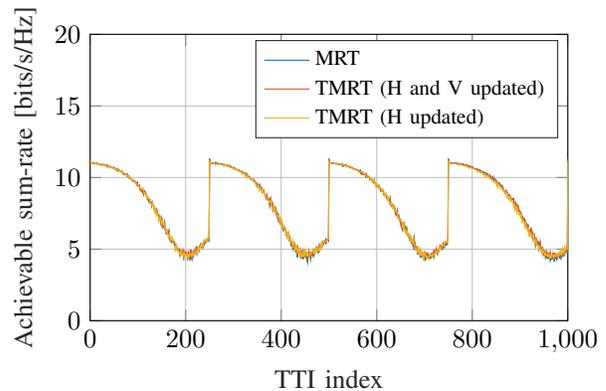}
	\caption{MRT-based precoders performance with circular tracks.}
	\label{fig:mrt:circ}
\end{figure}
~
\begin{figure}	
	\centering
	\input{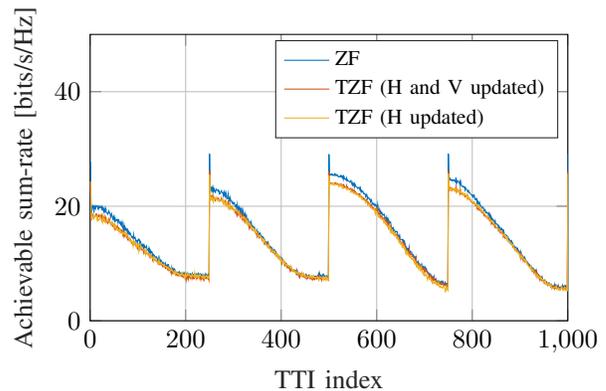}
	\caption{ZF-based precoders performance with circular tracks.}
	\label{fig:zf:circ}
\end{figure}	

Figures~\ref{fig:mrt:lin} and \ref{fig:zf:lin} show the performance of the MRT- and ZF-based precoders for linear tracks, respectively. In this scenario, both elevation and azimuth components evolve in time, and therefore, the corresponding precoders need to be updated at each uplink slot. These figures indicate that the tensor precoders face an important performance loss when only the horizontal precoders are updated. However, the loss relative to their linear counterparts is insignificant when both horizontal and vertical components are updated. These results demonstrate that the proposed tensor precoders may be applied to any kind of UE track to approximate their linear counterpart when both horizontal and vertical components are updated.

In the previous experiments, we have analyzed the precoders in terms of the achievable sum-rate. The ZF-based precoders are evaluated in terms of the empirical \ac{cdf} of the precoder runtime in Figure~\ref{fig:runtime}. This figure reveals that the tensor approach is roughly twice  as fast as its linear counterpart. Therefore, \ac{tzf} strikes an excellent rate-complexity trade-off.

	\begin{figure}[t]
		\centering
		\input{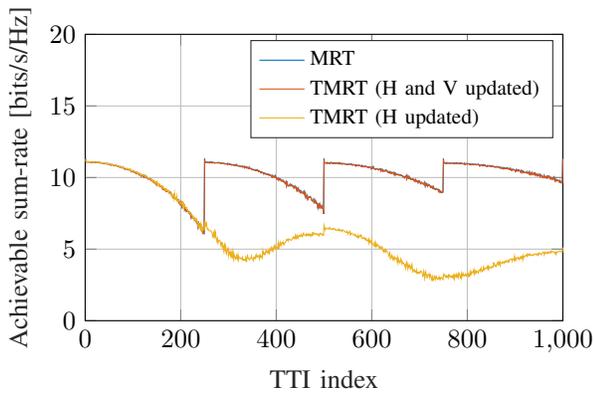}
		\caption{MRT-based precoders performance with linear tracks.}
		\label{fig:mrt:lin}
	\end{figure}
	\begin{figure}[t]
		\centering
		\input{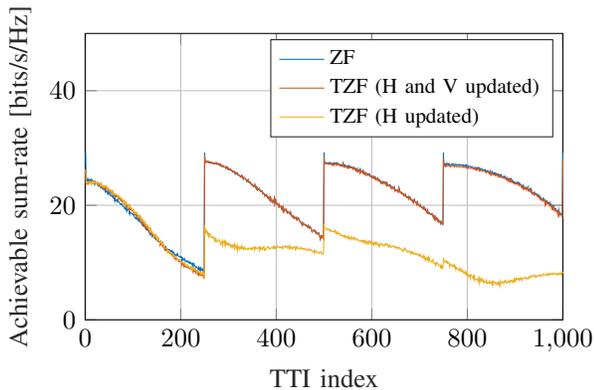}
		\caption{ZF-based precoders performance with linear tracks.}
		\label{fig:zf:lin}
	\end{figure}

\section{Conclusion}

In this paper, we present novel results concerning the subspace separability of Rician channels, and, based on these results, we propose efficient tensor precoders. We assess the validity of the subspace separability and the performance of the proposed precoders through computer simulations. The \ac{tmrt} and \ac{tzf} precoders can closely approximate their linear counterparts while exhibiting a much lower computational complexity. Specifically, the \ac{tzf} precoder is twice as fast as the \ac{zf} precoder, while exhibiting a negligible rate loss in a very dynamic communication scenario. 

\begin{figure}[t]
	\centering
	% This file was created by matlab2tikz.
%
%The latest updates can be retrieved from
%  http://www.mathworks.com/matlabcentral/fileexchange/22022-matlab2tikz-matlab2tikz
%where you can also make suggestions and rate matlab2tikz.
%
\definecolor{mycolor1}{rgb}{0.00000,0.44700,0.74100}%
\definecolor{mycolor2}{rgb}{0.85000,0.32500,0.09800}%
\begin{tikzpicture}

\begin{axis}[%
width=2.5in,
height=1.5in,
scale only axis,
xmin=0,
xmax=50,
xlabel style={font=\color{white!15!black}},
xlabel={Runtime $t$ [ms]},
ymin=0,
ymax=1,
ylabel style={font=\color{white!15!black}},
ylabel={$F(t)$},
axis background/.style={fill=white},
xmajorgrids,
ymajorgrids,
legend style={at={(0.95,0.4)}, legend cell align=left, align=left, draw=white!15!black, font=\footnotesize}
]
\addplot [color=mycolor1, line width=1.5pt]
  table[row sep=crcr]{%
23.027	0\\
23.027	0.01\\
23.3521	0.02\\
23.4875	0.03\\
23.8573	0.04\\
24.0316	0.05\\
24.2001	0.0600000000000001\\
24.225	0.0700000000000001\\
24.341	0.0800000000000001\\
24.5103	0.0900000000000001\\
24.6385	0.1\\
24.7728	0.11\\
24.7864	0.12\\
24.8367	0.13\\
24.9045	0.14\\
24.9808	0.15\\
24.9872	0.16\\
25.0159	0.17\\
25.0779	0.18\\
25.1278	0.19\\
25.1316	0.2\\
25.144	0.21\\
25.2045	0.22\\
25.2675	0.23\\
25.2739	0.24\\
25.3256	0.25\\
25.3599	0.26\\
25.3652	0.27\\
25.3793	0.28\\
25.4054	0.29\\
25.4342	0.3\\
25.5008	0.31\\
25.5256	0.32\\
25.5286	0.33\\
25.5339	0.34\\
25.5733	0.35\\
25.5812	0.36\\
25.5999	0.37\\
25.6353	0.38\\
25.6783	0.39\\
25.7562	0.4\\
25.7982	0.41\\
25.8275	0.42\\
25.8467	0.43\\
25.8974	0.44\\
25.9321	0.45\\
25.9483	0.46\\
26.043	0.47\\
26.085	0.48\\
26.0853	0.49\\
26.1541	0.5\\
26.1696	0.51\\
26.2251	0.52\\
26.2394	0.53\\
26.3587	0.54\\
26.375	0.55\\
26.4202	0.56\\
26.4666	0.570000000000001\\
26.5053	0.580000000000001\\
26.5795	0.590000000000001\\
26.6004	0.600000000000001\\
26.6159	0.610000000000001\\
26.6281	0.620000000000001\\
26.6697	0.63\\
26.7184	0.64\\
26.7931	0.65\\
26.83	0.66\\
26.9086	0.67\\
26.971	0.68\\
27.0166	0.69\\
27.0521	0.7\\
27.0806	0.71\\
27.2737	0.72\\
27.2833	0.73\\
27.2984	0.74\\
27.3526	0.75\\
27.4209	0.76\\
27.4514	0.77\\
27.8673	0.78\\
27.967	0.79\\
28.0135	0.8\\
28.0218	0.81\\
28.2511	0.82\\
28.33	0.83\\
28.3345	0.84\\
28.3934	0.85\\
28.609	0.86\\
28.829	0.87\\
28.849	0.88\\
29.0288	0.89\\
29.1484	0.9\\
29.2264	0.91\\
29.5327	0.92\\
29.6743	0.93\\
29.6851	0.94\\
29.7361	0.95\\
30.0278	0.96\\
30.1474	0.97\\
30.6559	0.98\\
30.6681	0.99\\
41.8641	1\\
};
\addlegendentry{ZF}

\addplot [color=mycolor2, line width=1.5pt]
  table[row sep=crcr]{%
10.1333	0\\
10.1333	0.01\\
10.2332	0.02\\
10.2528	0.03\\
10.2564	0.04\\
10.257	0.05\\
10.259	0.0600000000000001\\
10.2615	0.0700000000000001\\
10.2917	0.0800000000000001\\
10.293	0.0900000000000001\\
10.3285	0.1\\
10.3378	0.11\\
10.3394	0.12\\
10.3502	0.13\\
10.3646	0.14\\
10.3692	0.15\\
10.3755	0.16\\
10.3816	0.17\\
10.3981	0.18\\
10.3999	0.19\\
10.4142	0.2\\
10.4148	0.21\\
10.4229	0.22\\
10.4439	0.23\\
10.4701	0.24\\
10.491	0.25\\
10.5129	0.26\\
10.5416	0.27\\
10.5557	0.28\\
10.5811	0.29\\
10.5815	0.3\\
10.5955	0.31\\
10.6019	0.32\\
10.6019	0.33\\
10.6364	0.34\\
10.6368	0.35\\
10.6449	0.36\\
10.6451	0.37\\
10.6584	0.38\\
10.6726	0.39\\
10.6769	0.4\\
10.699	0.41\\
10.7217	0.42\\
10.7307	0.43\\
10.7335	0.44\\
10.7555	0.45\\
10.7942	0.46\\
10.7966	0.47\\
10.8467	0.48\\
10.8542	0.49\\
10.8656	0.5\\
10.8713	0.51\\
10.8906	0.52\\
10.9229	0.53\\
10.9336	0.54\\
10.9601	0.55\\
10.9958	0.56\\
11.0341	0.570000000000001\\
11.0409	0.580000000000001\\
11.0625	0.590000000000001\\
11.1122	0.600000000000001\\
11.1192	0.610000000000001\\
11.1465	0.620000000000001\\
11.1698	0.63\\
11.1828	0.64\\
11.212	0.65\\
11.2549	0.66\\
11.2718	0.67\\
11.2729	0.68\\
11.2852	0.69\\
11.2856	0.7\\
11.3346	0.71\\
11.3632	0.72\\
11.4466	0.73\\
11.4851	0.74\\
11.5943	0.75\\
11.6166	0.76\\
11.6751	0.77\\
11.7183	0.78\\
11.8078	0.79\\
11.9249	0.8\\
12.0504	0.81\\
12.188	0.82\\
12.2665	0.83\\
12.2687	0.84\\
12.3629	0.85\\
12.3858	0.86\\
12.5373	0.87\\
12.797	0.88\\
12.8322	0.89\\
12.8927	0.9\\
13.1009	0.91\\
13.1891	0.92\\
13.2133	0.93\\
13.2693	0.94\\
13.3166	0.95\\
13.6083	0.96\\
14.1482	0.97\\
14.2205	0.98\\
15.119	0.99\\
33.3159	1\\
};
\addlegendentry{TZF (H and \\ V updated)}

\end{axis}
\end{tikzpicture}%
	\caption{Empirical \ac{cdf} of runtime.}
	\label{fig:runtime}
\end{figure}
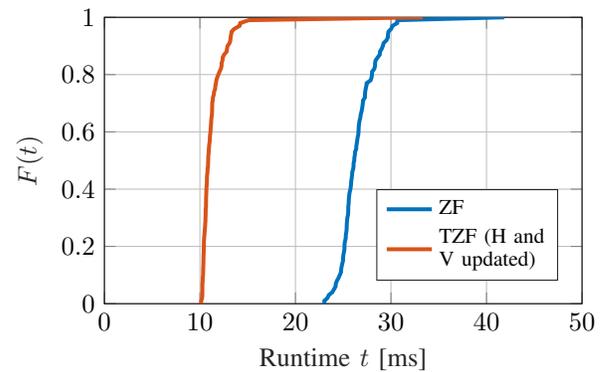

We considered a single specular \ac{los} component, which is adequate for specific scenarios, i.e., millimeter wave systems with a strong \ac{los} component. We plan to extend the proposed tensor precoding framework to manage communication scenarios with multiple specular components and multi-antenna receivers.

\bibliographystyle{IEEEtran}
\bibliography{./draft}

\end{document}